\newcommand{\newshortstack}[1]
{\begingroup\renewcommand{\arraystretch}{1.1}
\ifmmode
\begin{array}{c}#1\end{array}%
\else
\begin{tabular}{c}#1\end{tabular}%
\fi
\endgroup}
\pgfplotsset{
        compat=1.9,
        compat/bar nodes=1.8,
    }
\theoremstyle{definition} 
\newtheorem{theorem}{Theorem}
\newtheorem{corollary}{Corollary}[theorem]
\newcommand{\be}{\begin{equation}}
	\newcommand{\ee}{\end{equation}}
\newcommand{\bea}{\begin{eqnarray}}
	\newcommand{\eea}{\end{eqnarray}}
\newcommand{\slz}{\text{SL}(2,\mathbb{Z})}
\newcommand{\pslz}{\text{PSL}(2,\mathbb{Z})}
\newcommand{\kah}{\mathcal{K}}
\newcommand{\tbar}{\bar{T}}
\newcommand{\sbar}{\bar{S}}
\newcommand{\Sb}{\bar{S}}
\newcommand{\alphap}{\alpha^\prime}
\renewcommand{\epsilon}{\varepsilon}
\newcommand{\ben}{\begin{enumerate}}
	\newcommand{\een}{\end{enumerate}}
\newcommand{\bei}{\begin{itemize}}
	\newcommand{\eei}{\end{itemize}}
\tikzset{
    dot diameter/.store in=\dot@diameter,
    dot diameter=3pt,
    dot spacing/.store in=\dot@spacing,
    dot spacing=10pt,
    dots/.style={
        line width=\dot@diameter,
        line cap=round,
        dash pattern=on 0pt off \dot@spacing
    }
}
\tikzset{decorate sep/.style 2 args=
{decorate,decoration={shape backgrounds,shape=circle,shape size=#1,shape sep=#2}}}
\newtheorem*{theorem*}{Theorem}
\newcommand{\mc}{\mathcal}
\newcommand{\coma}{\, , \quad}
\newcommand{\fstop}{\, .}
\def\im{{\text{Im} \,}}
\def\re{{\text{Re} \,}}
\definecolor{outerrim}{rgb}{16,133,47}
\title{Heterotic de Sitter Beyond Modular Symmetry}
\author[a]{Jacob M. Leedom,}
\author[b]{Nicole Righi,}
\author[a]{$and$ Alexander Westphal}
\affiliation[a]{ Deutsches Elektronen-Synchrotron DESY\\ Notkestr. 85, 22607 Hamburg, Germany}
\affiliation[b]{ Physics Department, King's College London\\ Strand, London, WC2R 2LS, U.K.}
\emailAdd{jacob.michael.leedom@desy.de}
\emailAdd{nicole.righi@kcl.ac.uk}
\emailAdd{alexander.westphal@desy.de}
\abstract{We study the vacua of $4d$ heterotic toroidal orbifolds using effective theories consisting of an overall K\"{a}hler modulus, the dilaton, and non-perturbative corrections to both the superpotential and K\"{a}hler potential that respect modular invariance. We prove three de Sitter no-go theorems for several classes of vacua and thereby substantiate and extend previous conjectures. Additionally, we provide evidence that extrema of the scalar potential can occur inside the PSL$(2,\mathbb{Z})$ fundamental domain of the K\"{a}hler modulus, in contradiction of a separate conjecture. We also illustrate a loophole in the no-go theorems and determine criteria that allow for metastable de Sitter vacua. Finally, we identify inherently stringy non-perturbative effects in the dilaton sector that could exploit this loophole and potentially realize de Sitter vacua.}
\begin{document}
\maketitle
	\pagestyle{plain}

	\makeatletter
	\@addtoreset{equation}{section}
	\makeatother
	\renewcommand{\theequation}{\thesection.\arabic{equation}}
	\pagestyle{empty}
	\setcounter{page}{1}
	\pagestyle{plain}
	\renewcommand{\thefootnote}{\arabic{footnote}}
	\setcounter{footnote}{0}
	

\newpage
\section{Introduction}

The accelerating expansion of the Universe~\cite{SupernovaCosmologyProject:1998vns,SupernovaSearchTeam:1998fmf} and the nature of the dark energy driving it remain some of the greatest challenges in theoretical physics. The simplest mathematical model consistent with observations is that our universe is in a de Sitter (dS) phase and dark energy corresponds to a cosmological constant in Einstein's equations. However, the scale of this constant is orders of magnitudes below naive expectations from the Standard Model of particle physics.

One might hope that natural explanations for the origin and scale of dark energy can be found in an ultraviolet complete theory of particle interactions and gravity. Our only viable candidate for such a theory is string theory, and so the quest for understanding the cosmological constant transmutes into a question of which string compactifications can yield four-dimensional dS cosmologies.

The construction of 
such cosmologies within string theory has been and remains one of the paramount tasks for string theory to connect to the low-energy physics of our universe. Existing construction schemes with still limited, but increasingly improved, control involve perturbative moduli stabilization with fluxes, branes and orientifold planes in dS vacua on negatively curved internal manifolds (such as twisted tori, product Riemann surfaces, and more general compact $6d$ hyperbolic spaces) as well as dS vacua on Calabi-Yau (CY) manifolds with fluxes and either purely non-perturbative or a mix of perturbative and non-perturbative quantum corrections stabilizing the volume moduli of the CY. For recent reviews providing a comprehensive overview as well as references to the original literature, see~\cite{Silverstein:2016ggb,Cicoli:2018kdo,Flauger:2022hie}.

While the viability of the various constructions is not completely settled, one could wonder if there exists some reason that dS vacua cannot exist at all in string theory~\cite{Danielsson:2018ztv}. Indeed, the refined dS conjecture~\cite{Garg:2018reu,Ooguri:2018wrx,Hebecker:2018vxz} as part of the Swampland program~\cite{Vafa:2005ui} posits that long periods of accelerated cosmological expansion are severely limited in quantum gravity and metastable dS minima are forbidden in asymptotic regions of the moduli space (the earlier, stronger dS conjecture~\cite{Obied:2018sgi} required modification due to explicit counter-examples~\cite{Denef:2018etk,Conlon:2018eyr,Murayama:2018lie,Choi:2018rze,Hamaguchi:2018vtv}).
 The asymptotic dS conjecture can be thought of as a generalization of the Dine-Seiberg problem \cite{Dine:1985he} in all directions of the moduli space. There is also an even weaker version of the dS conjecture, the Transplanckian Censorship Conjecture \cite{Bedroya:2019snp}, which allows for short-lived dS minima residing in the interior of moduli space.

The above statements are conjectural, but they are motivated by a number of no-go results in the literature that forbid dS vacua in specific contexts. The classical $10d$ supergravity Maldacena-Nu\~nez no-go result~\cite{Maldacena:2000mw} suggests that any dS constructions must include non-classical ingredients as D-branes, O-planes, or quantum corrections. In the context of heterotic string theories, this no-go has been pushed farther to include stringy effects. The analysis in~\cite{Covi:2008ea} provides the general criterion which the K\"ahler manifolds of the volume moduli space of CY compactifications producing $4d$, ${\cal N}=1$ supergravity effective actions must fulfill to allow for metastable dS vacua to exist. This is a generalization of the K\"ahler geometry arguments provided in~\cite{Gomez-Reino:2006tjy} based on~\cite{Brustein:2004xn}.

Next,~\cite{Green:2011cn} considered the sub-leading $\alphap$ corrections arising from the modified Bianchi identity of the Neveu-Schwarz (NS) 2-form $B_2$ and concluded that dS vacua are not possible. This was further extended to an infinite tower of $\alphap$ contributions, where a perturbative calculation showed that AdS and dS are both ruled out~\cite{Gautason:2012tb}. Finally, the powerful worldsheet argument of~\cite{Kutasov:2015eba} ruled out any worldsheet effect giving rise to dS vacua at tree level in string perturbation theory.

Apart from worldsheet effects, string compactifications will have non-perturbative contributions in the string coupling $g_s$. Focusing again on heterotic constructions, a simple example is gaugino condensation, which has a generic strength of order $\delta\mathcal{L}\sim e^{-1/g_s^2}$. In~\cite{Quigley:2015jia}, partial no-go results were obtained ruling out AdS and dS solutions. However, this was not a comprehensive argument since threshold corrections~\cite{Kaplunovsky:1987rp,Dixon:1990pc,Antoniadis:1991fh,Antoniadis:1992rq,Antoniadis:1992sa,Kaplunovsky:1995jw} and worldsheet instantons were not included, hence some heterotic constructions evade the no-go results~\cite{Cicoli:2013rwa}. There are some results incorporating gaugino condensation, threshold corrections, and worldsheet instantons in the context of toroidal orbifold compactifications of the heterotic string~\cite{Gonzalo:2018guu,Parameswaran:2010ec}. The authors of~\cite{Gonzalo:2018guu} consider only the dilaton and overall K\"{a}hler modulus of the compactification. Using target space modular symmetry to enumerate all possible non-perturbative contributions, the authors find that AdS minima are generically present, while they argue numerically that no dS solutions can be realized. Note that~\cite{Brustein:2004xn} already contained in its Section II a limited version of the statements in~\cite{Gonzalo:2018guu}. On the other hand, ~\cite{Parameswaran:2010ec} considered all bulk moduli for certain orbifolds and numerically found only unstable dS extrema that satisfied the refined dS conjecture~\cite{Olguin-Trejo:2018zun}. 

However, gaugino condensation is not the only non-perturbative contribution in $g_s$ present in the heterotic string theory. As argued by Shenker~\cite{Shenker:1990}, all closed string theories generically have effects of strength $\delta\mathcal{L}\sim e^{-1/g_s}$. These are inherently stringy effects, in contrast to the purely quantum field theoretical nature of gaugino condensation. As the above arguments apply only to $\mathcal{O}(e^{-1/g_s^2})$ effects, examining these ``Shenker-like" contributions to heterotic string vacua is the logical extension of previous no-go results.
 
 In this work, we set our sights on this task. We will consider two-modulus models of heterotic toroidal orbifold compactifications and include Shenker-like effects as non-perturbative corrections to the dilaton K\"{a}hler potential $k(S,\bar{S})$. We will prove three no-go theorems that forbid dS vacua for different branches of solutions, and affirm several conjectures in~\cite{Gonzalo:2018guu} as corollaries. However, we find that on a separate branch of solutions, the Shenker-like effects may provide a loophole to the no-go theorems and permit heterotic dS vacua.

 The paper is organized as follows. In~\cref{sec:heterotic}, we review the relevant details of standard heterotic toroidal orbifold compactifications with a focus on the effective two-modulus model and the interplay of T-duality, threshold corrections, and non-perturbative effects. Then, we prove a no-go theorem that forbids de Sitter vacua for a class of extrema, even with Shenker-like effects included in the dilaton K\"{a}hler potential. In \cref{sec:vacua}, we describe sufficient criteria to evade this no-go result and describe in detail the behavior of extrema in the fundamental domain of the K\"{a}hler modulus. In~\cref{sec:npeffects}, we prove an additional no-go theorem and further restrict the classes of models that could contain dS vacua. We then review the scant literature on Shenker-like effects and show that they in principle satisfy the criteria to evade the no-go results we establish. 
We provide preliminary examples where dS vacua can be constructed in a bottom-up fashion. We also comment on a puzzle posed in~\cite{Silverstein:1996xp} on non-perturbative effects that are even stronger than those of Shenker at weak coupling. In \cref{sec:discussion}, we conclude and discuss multiple future directions. In the appendices, we include supplemental information for the main text as well as proving a third no-go theorem in~\cref{sec:anomaly} for orbifolds with moduli mixing in the K\"ahler potential arising from anomaly cancellation.

\section{Heterotic Orbifolds \& A No-Go Theorem}
\label{sec:heterotic}
In this section, we give a brief review of heterotic toroidal orbifolds, defining the set of four-dimensional effective theories considered throughout the paper.
These consist of an overall K\"{a}hler modulus $T$ and dilaton $S$.
Non-perturbative effects such as gaugino condensation and worldsheet instantons are captured by the superpotential $W(S,T)$ while the non-perturbative Shenker-like effects are captured by the dilaton K\"{a}hler potential $k(S,\bar{S})$. We then study the extrema of the scalar potential and prove a dS no-go theorem for a particular branch of extrema.

\subsection{Simplified Heterotic Orbifold Models}
We consider $\mathcal{N}=1$ supersymmetric toroidal orbifold models of the heterotic string, i.e. the heterotic string on $T^6/G$, where $G$ is the orbifold action defined by some discrete group. For $(2,2)$ constructions arising from standard embedding, the spectrum of these models includes the dilaton, $h^{1,1}$ K\"{a}hler moduli, $h^{2,1}$ complex structure moduli, gauge fields, and twisted and untwisted matter fields. General $(0,2)$ orbifolds contain the same untwisted moduli as the $(2,2)$ compactifications but with model-dependent twisted sector moduli.

\begin{figure}[t]
    \centering
\begin{tikzpicture}[scale=2]
\draw[line width=0.5pt,black,Triangle-] (-1.2,0) -- (0,0);
\draw[line width=0.5pt,black,-Triangle] (0,0) -- (1.2,0) node[below] {\color{black}{\small{Re$(T_i)$}}};
\draw[line width=0.5pt,black] (0,0) -- (0,1);
\draw[line width=0.5pt] (0,1) -- (0,2.75) node[left,xshift=0.15cm,yshift=0.3cm] {\color{black}{\small{Im$(T_i)$}}};
\draw[dashed,line width=0.7pt] (0.5,0.866) -- (0.5,2.8);
\draw[line width=0.7pt,OliveGreen,-Triangle] (-0.5,0.866) -- (-0.5,2.8);
\draw[line width=0.7pt,OliveGreen] (-0.5,0.866) arc[start angle=120, end angle=90,radius=1cm];
\draw[fill=OliveGreen!30,path fading=north,line width=0.01pt](0,1)--(0,2.7)--(-0.5,2.7)--(-0.5,0.866) arc[start angle=120, end angle=90,radius=1cm,line width=0.001pt];
\draw[fill=OliveGreen!30,path fading=north,line width=0.001pt](0,1)--(0,2.7)--(0.5,2.7)--(0.5,0.866) arc[start angle=60, end angle=90,radius=1cm,line width=0.001pt];
\draw[dashed,line width=0.7pt] (0.5,0.866) arc[start angle=60, end angle=90,radius=1cm];
\node at (0,1)[circle,fill=Black,inner sep=1.5pt,label={[xshift=0.12cm,yshift=-0.6cm]:$i$}]{};
\node at (-0.5,0.866)[circle,fill=Black,inner sep=1.5pt,label={[xshift=0cm,yshift=-0.63cm]:$\rho$}]{};
\node at (0,1.23)[circle,fill=Black,inner sep=1.5pt,label={[xshift=0.5cm,yshift=-0.1cm]:\small{$1.23i$}}]{};
\end{tikzpicture}
 \caption{The strict fundamental domain $\widetilde{\mathcal{F}}_i$ of the diagonal K\"ahler modulus $T_i$. The dots indicate the fixed points as well as the oft-quoted stabilized value $T_i\simeq1.23i$ found in heterotic moduli stabilization.
} 
 \label{fig:sl2zfun}
\end{figure}
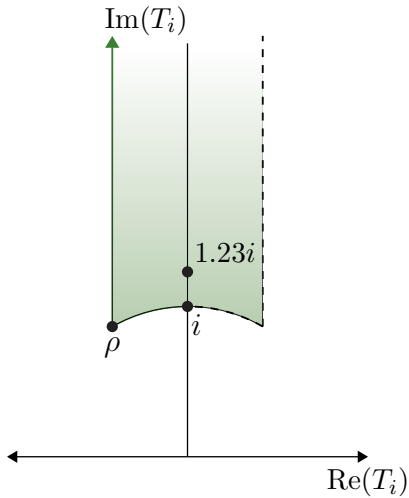

Of critical importance is that the effective field theories (EFTs) of these orbifolds respect a target-space modular symmetry arising from T-duality~\cite{Ferrara:1989bc,Font:1990nt,Cvetic:1991qm}. If we consider for the moment the $T^6$ to be a factorized product of three $T^2$, then each of the diagonal K\"{a}hler moduli $T_i$ has a $\pslz_i$ modular symmetry that acts as
\be\label{eq:slztransf}
    T_i\rightarrow \gamma_i\cdot T_i = \frac{a_iT_i+b_i}{c_iT_i+d_i}\quad \text{ with } \quad \gamma_i = \begin{pmatrix}a_i & b_i\\ c_i & d_i\end{pmatrix}\in\pslz_i\fstop
\ee
Ostensibly, the K\"{a}hler moduli are valued in the upper half planes
\be
\mathcal{H}_i = \{T_i\in\mathbb{C}\;\; | \;\;\text{Im}(T_i)>0\}
\ee
since their vacuum expectation values (vevs) control the size of the compact dimensions. However, modular symmetry restricts the set of inequivalent $T_i$ values to the strict fundamental domain $\widetilde{\mathcal{F}}_i$, which is defined as the union of the set
\be
\mathcal{F}_i = \bigg\{ T_i \in \mathcal{H}_i\;\; \rvert\;\; \lvert T_i\rvert >1 \;\;\&\;\; \lvert\text{Re}(T_i) \rvert< \frac{1}{2} \bigg\}\coma 
\ee
with boundary points that satisfy $\text{Re}(T_i)<0$. This is the region displayed in~\cref{fig:sl2zfun}. Of particular note are the fixed points $T_i=i$ and $T_i= e^{2\pi i/3}\equiv \rho$, which are fixed by cyclic subgroups of $\pslz_i$, as described in~\cref{app:modularforms}. 

The matter fields $\Phi_{\alpha}$ of the compactification also transform under modular transformations as
\be
    \Phi_{\alpha}\rightarrow  e^{\sum_in_{\alpha}^i \mathcal{J}_i(T_i)}\Phi_{\alpha}\coma
\label{eq:mattransf}
\ee
where $\mathcal{J}_i(T) = \ln(c_i T_i+d_i)$ and the numbers $n_{\alpha}^i$ are the weights of the matter field. For an arbitrary orbifold, the modular symmetry group may not be a simple product of $\pslz$ factors. For example, if one introduces discrete Wilson lines or orbifolds where the $6d$ lattice is not a simple direct sum, then the duality group will in general be a congruence subgroup of $\pslz$~\cite{Love:1996sk,Mayr:1993mq}. Irrespective of the precise form of the modular symmetry group, the effective action and the scalar potential 
\be
    V = e^{\mathcal{K}} \left(\kah^{a\bar b} F_a \bar F_{\bar b} - 3W\bar W\right)
\label{eq:sugrapot}    
\ee
must be modular functions, i.e. they should transform as modular forms of weight\footnote{See \cref{app:modularforms} for terminology and details on modular forms.} $(0,0)$: $V(\gamma\cdot T_i, \gamma\cdot \bar{T}_i,...) = V(T_i,\bar{T}_i,..)$. Here $\mathcal{K}$ and $W$ are the K\"{a}hler potential and superpotential, respectively, and the $F_a$ are the F-terms of the fields, defined below.  The above condition can also be cast as the restriction that the defining supergravity function $\mathcal{G} = \mathcal{K} + \ln\lvert W\rvert^2$ is modular invariant.

In the following, we shall neglect all of the moduli and matter fields except for an overall diagonal K\"{a}hler modulus, $T$, and the dilaton. We shall also neglect Wilson lines and therefore take the modular symmetry of the effective field theory to be $\pslz$. The K\"{a}hler potential of this two-modulus EFT is
\be
    \kah= k(S,\bar{S},T,\bar{T}) -3\ln\left(-i(T-\bar T)\right) \fstop
    \label{eq:kahpot1}
\ee
For the moment, we are using the familiar chiral multiplet formalism for the dilaton with $S$ being its chiral superfield representation consisting of a scalar (the dilaton), a pseudoscalar (from dualizing the NS 2-form $B_2$), a Weyl spinor, and an auxiliary field. The function $k(S,\bar{S},T,\bar{T})$ is the K\"{a}hler potential for the dilaton, which may have dependence on $T$, as discussed below. At tree-level, there is no $T$-dependence in the dilaton K\"{a}hler potential, hence $k(S,\bar{S},T,\bar{T}) = -\ln(S+\bar{S})$ and the $4d$ universal gauge coupling is 
\begin{equation}
    \frac{g_4^2}{2} = \bigg\langle \frac{1}{S+\bar S}\bigg\rangle\fstop
\end{equation}
Under modular transformations, the combination $T-\bar T$ transforms as a weight $(-1,-1)$ modular form since $(T-\bar T)\rightarrow \lvert  cT+d\rvert^{-2}(T-\bar T)$. Then assuming that $k(S,\bar{S},T,\bar{T})$ is inert under $\pslz$, the K\"{a}hler potential undergoes a K\"{a}hler transformation 
\be
\mathcal{K}\rightarrow \mathcal{K} + 3\mathcal{J}(T)+3\bar{\mathcal{J}}(\bar{T}),
\ee
where $\mathcal{J}(T)$ is the same as in the phase of~\cref{eq:mattransf}  but with the indices removed. The invariance condition on $\mathcal{G}$ then implies that the superpotential must transform as a weight $(-3,0)$ modular form~\cite{Ferrara:1989bc}:
\be
 W(S,\gamma\cdot T) = e^{i\delta(\gamma)} (cT+d)^{-3}W(S,T)\fstop
  \label{eq:superpottransf}
\ee
Technically, we allow the superpotential to furnish a projective representation of $\pslz$ and so it may transform as a weight $(-3,0)$ modular form up to the phase $\delta(\gamma)$, which depends only on the matrix $\gamma\in\pslz$.


As we are neglecting matter fields, the superpotential we are considering is non-perturbative in nature and arises from gaugino condensation of some subgroups of the $10d$ heterotic gauge sector. If the gauge group factor $G_a$ undergoes gaugino condensation, a non-perturbative superpotential is generated of the form  
\be
 W\sim e^{-f_a/b_a}\fstop
 \label{eq:superpot0}
\ee
Here $f_a$ is the gauge kinetic function for $G_a$ and the coefficient $b_a$ is related to the beta function of $G_a$ as $\mu \frac{dg_a}{d\mu} = -\frac{3}{2}b_a g_a^3$ and is given by 
\be
    b_a = \frac{1}{8\pi^2}\left(C_a - \frac{1}{3}\sum_\alpha C^\alpha_a\right)\coma
\ee
where $C_a$ and $C^\alpha_a$ are the quadratic Casimirs of the adjoint and the $\alpha$-th matter sectors, respectively. At tree level, $f_a = k_aS$, with $k_a$ the level of the Kac-Moody algebra underlying the gauge group $G_a$. In this form, it is not clear that~\cref{eq:superpot0} transforms with weight $(-3,0)$ since naively the dilaton is invariant under modular transformations. 

This situation is remedied by taking into account the 1-loop effects of threshold corrections~\cite{Kaplunovsky:1987rp,Dixon:1990pc,Antoniadis:1991fh,Antoniadis:1992rq,Antoniadis:1992sa,Kaplunovsky:1995jw} and anomaly cancellation~\cite{Derendinger:1991hq,Lust:1991yi,LopesCardoso:1991wk,LopesCardoso:1991ifk,LopesCardoso:1992yd,deCarlos:1992kox}. The physical origins of these effects are quite simple -- the fermions in the $4d$ theory undergo modular transformations and generally lead to non-zero anomalies that must be canceled by the $4d$ Green-Schwarz mechanism. Furthermore, if the orbifold has $\mathcal{N}=2$ subsectors, integrating out the heavy string states will lead to moduli-dependent corrections to the gauge kinetic function. To incorporate these contributions to the effective model, we include a correction into the gauge kinetic function 
\be
    f_a = k_a S +\left(b^\prime_a - \frac{1}{3} k_a \delta_{GS}\right) \ln\eta^6(T)+\cdots\coma
\label{eq:gaugekin}
\ee
and to the dilaton K\"{a}hler potential 
\be
    -\ln (S+\bar{S})\rightarrow  -\ln\bigg(S+\bar{S} + \delta_{GS}\ln\left(-i(T-\bar T)\right)\bigg)\fstop
\label{eq:Kahanom}
\ee
The coefficient $b_a^\prime$ is determined by the quadratic Casimirs and the modular weights of the matter fields charged under $G_a$:
\be
    b_a^\prime =  \frac{1}{8\pi^2}\left(C(G^a) - \sum_\alpha C^\alpha_a(1+2n_\alpha)\right)\fstop
\ee
The dilaton now transforms as
\be
    S\rightarrow S + \delta_{GS}\ln(cT+d)
\label{eq:diltran}
\ee
under $\pslz$ in order to maintain modular invariance of the action. If we neglect matter fields, as in the case of a hidden $E_8$ condensate, then inserting~\cref{eq:gaugekin} into~\cref{eq:superpot0} yields
\be
    W\sim \frac{e^{-k_aS/b_a}}{\left(\eta(T)\right)^{6-\frac{2k_a\delta_{GS}}{b_a}}}\fstop
\label{eq:superpotp5}
\ee
As described in~\cref{app:modularforms}, $\eta(T)$ transforms with weight $(1/2,0)$, and combining this fact with~\cref{eq:diltran} we see that the superpotential transforms with weight $(-3,0)$ and the scalar potential will be invariant under modular transformations. Hence, threshold corrections and anomaly cancellation are essential ingredients in the consistency of the $4d$ EFT arising from the heterotic string.

While the above makes the consistency of modular invariance in the EFT clear, it will be convenient for our purposes to adopt a convention where the dilaton is inert under modular transformations. One is free to re-define the dilaton via a holomorphic function of other moduli~\cite{Kaplunovsky:1995jw,Derendinger:1991hq} -- in particular, we can re-define the dilaton via
\be
S\rightarrow S + \delta_{GS}\ln\eta^2(T)\fstop 
\label{eq:dilredef}
\ee
This eliminates the dependence of the superpotential on $\delta_{GS}$ so that all information on anomaly cancellation in encoded in the re-defined K\"{a}hler potential.\footnote{An alternative approach is to utilize the linear multiplet formalism for the dilaton. The dilaton is still inert under modular transformations, and anomaly cancellation is then achieved by adding a term to the action while leaving the K\"{a}hler potential untouched. We return to this point in~\cref{sec:npeffects}. } 

However, the superpotential obtained from~\cref{eq:superpotp5} with~\cref{eq:dilredef} is still too primitive -- as the dots in~\cref{eq:gaugekin} suggest, there are additional contributions to the threshold corrections~\cite{Kaplunovsky:1995tm,Kiritsis:1996dn}.\footnote{These terms include the 1-loop prepotential of the $\mathcal{N}=2$ sector, which has an interesting relation to Mathieu moonshine~\cite{Wrase:2014fja}. } These are moduli-dependent but transform trivially under $\pslz$.  One can push the power of modular invariance even further to parametrize these effects. After the dilaton re-definition in~\cref{eq:dilredef}, the Dedekind etas saturate the transformation law of the superpotential, but the numerator could involve a function of $T$ that is modular invariant (up to a phase). Then a general non-perturbative superpotential satisfying the T-duality requirement has the form
\be
 W(S,T) = \frac{\Omega(S)H(T)}{\eta^6(T)}
 \label{eq:superpot1}\fstop
\ee
Here $H(T)$ is a modular function with a potentially non-trivial multiplier system. If $H(T)$ is regular in the fundamental domain, a  theorem~\cite{10.2307/1968796,Lehner:1964} states that it has the parametrization 
\be
    H(T) = \bigg(\frac{G_4(T)}{\eta^8(T)}\bigg)^n\bigg(\frac{G_6(T)}{\eta^{12}(T)}\bigg)^m \mathcal{P}(j(T))\coma
   \label{eq:Hpara}
\ee
with $G_4(T)$ and $G_6(T)$ the weight $(4,0)$ and $(6,0)$ holomorphic Eisenstein series and $j(T)$ the $j$-invariant (see \cref{app:modularforms}). $\mathcal{P}(x)$ is a polynomial, and without loss of generality we take $\mathcal{P}(0)\neq 0$ and $\mathcal{P}(1728)\neq 0$. In~\cref{eq:superpot1}, we can technically allow an arbitrary function for $\Omega(S)$, but we will mostly consider it as arising from gaugino condensation and one should think of it as having the generic form $\Omega(S)=h+e^{-S/b_a}$ for a single condensate, with $h$ an additive constant to describe $H_3$-flux, or $\Omega(S)=\sum_a e^{-S/b_a}$ for a racetrack scenario.\footnote{A more general superpotential is
\be
    W(S,T) =\frac{1}{\eta^6(T)} \sum_a\Omega_a(S)H_a(T)\coma
\label{eq:gensuperpot}
\ee 
which corresponds to a racetrack scenario with $T$-dependent coefficients.} This superpotential was first proposed in~\cite{Cvetic:1991qm}.

Note that the terms in $H(T)$ have the schematic form of $\delta\mathcal{L} \sim e^{\pm2\pi i T}$ -- the function $H(T)$ can be thought of parametrizing non-perturbative effects in the K\"{a}hler modulus. Thus the superpotential in~\cref{eq:superpot1} parametrizes non-perturbative effects in the superpotential for both moduli in the effective model.

The above discussion defines a broad class of effective heterotic toroidal orbifold models via the superpotential in~\cref{eq:superpot1} and the K\"{a}hler potential from~\cref{eq:kahpot1} with  
\be
    k(S,\Sb,T,\tbar) \supset -\ln\bigg(S+\Sb+\delta_{GS}\ln\left(-i(T-\tbar)\lvert\eta(T)\rvert^4\right)\bigg)\fstop 
\label{eq:modinvKahpot}    
\ee
It is instructive to consider two extreme cases. For the simple $\mathbb{Z}_3$ orbifold with an $E_8$ condensate, there is no $\mathcal{N}=2$ subsector and $\delta_{GS}= 3b_{E_8}/k_{E_8}$. If we consider the formalism where the dilaton transforms under $\pslz$, we see that the exponent of the Dedekind eta in~\cref{eq:superpotp5} vanishes and the modular transformation properties of the superpotential are encoded entirely in the dilaton. In the formalism where the dilaton is invariant, the superpotential has factors of Dedekind eta, but these vanish at the level of the scalar potential when the original dilaton variable is used. These features are reflections of the usual statement that the $\mathbb{Z}_3$ (and $\mathbb{Z}_7$) orbifolds lack $\mathcal{N}=2$ subsectors and therefore have no moduli-dependent threshold corrections~\cite{Dixon:1990pc}. On the opposite end of the spectrum, the $\mathbb{Z}_2\times \mathbb{Z}_2$ orbifold with standard embedding has $\delta_{GS}=0$, so the dilaton is invariant under $\pslz$ without any redefinition and the Dedekind etas are required in the superpotential and scalar potential. These two examples illustrate the diverse ways in which heterotic orbifolds conspire to maintain modular symmetry in spite of effects that naively break the duality.

We now largely restrict ourselves to a particular subclass of the models described above. In particular, we will neglect the $T$-dependent correction to the dilaton K\"{a}hler potential 
arising from anomaly cancellation and set the dilaton-dependent term in~\cref{eq:kahpot1} to
\be
k(S,\bar{S},T,\bar{T}) = k(S,\bar{S}) = -\ln(S+\bar{S}) + \delta k(S,\bar{S})\fstop
\label{eq:minikah}
\ee
By the inclusion of the additional term $\delta k(S,\bar{S})$, we have in mind incorporating non-perturbative contributions arising from Shenker-like effects, as mentioned in the Introduction and further described in~\cref{sec:npeffects}. For the superpotential we take the general form in~\cref{eq:superpot1}. We consider $T$-dependent K\"{a}hler potential corrections in~\cref{sec:npeffects}, and we return to the general case of~\cref{eq:modinvKahpot} in~\cref{sec:anomaly}.

Using~\cref{eq:kahpot1,eq:minikah,eq:superpot1}, the F-term supergravity scalar potential is
\be
\begin{split}
    V(T,S) &= e^{\mathcal{K}}\left(\mc{K}^{S\bar{S}}F_S\bar{F}_{\bar{S}} \,+\,\mc{K}^{T\bar{T}}F_T\bar{F}_{\bar{T}} \,-\,3 W\bar{W}\right)\\
        &= e^{k(S,\sbar)}\,Z(T,\tbar)\,\lvert\Omega(S)\rvert^2\,\left[\left(A(S,\sbar)-3\right)\lvert H(T)\rvert^2+\widehat{V}(T,\tbar)\right]\coma
\end{split}
\label{eq:pot1} 
\ee
where we defined
\begin{align}
    A(S,\bar{S}) &= \frac{k^{S\sbar}F_S\bar{F}_{\sbar}}{\lvert W\rvert^2} =\frac{k^{S\sbar}\lvert\Omega_S + K_S\Omega\rvert^2}{\lvert\Omega\rvert^2}\coma\label{eq:potparts1}\\
    \widehat{V}(T,\tbar) &= \frac{-(T-\tbar)^2}{3}\,\bigg\lvert H_T(T) - \frac{3i}{2\pi}H(T)\widehat{G}_2(T,\tbar)  \bigg\rvert^2\coma\label{eq:potparts2}\\
    Z(T,\tbar) &= \frac{1}{i(T-\tbar)^3 \lvert\eta(T)\rvert^{12}}\coma
    \label{eq:potparts3}
\end{align}
with subscripts denoting derivatives and $k^{S\bar{S}} = (k_{S\bar{S}})^{-1}$. We have also introduced the non-holomorphic Eisenstein series of weight $(2,0)$, $\widehat{G}_2(T,\bar{T})$ (see \cref{app:modularforms}).
Note that each of the functions defined in~\cref{eq:potparts1,eq:potparts2,eq:potparts3} are modular invariant. 

This potential has been discussed in several contexts in the literature. Originally in~\cite{Font:1990nt, Cvetic:1991qm} as an effective field theory for heterotic phenomenology, then in the context of relating swampland conjectures to modular symmetry in~\cite{Gonzalo:2018guu}, and recently in a study to connect flavor symmetry and modular symmetry~\cite{Novichkov:2022wvg}. A fascinating feature of this potential is that it diverges in the limit $\text{Im}(T)\rightarrow\infty$. This has interesting implications for the Swampland Distance Conjecture~\cite{Ooguri:2006in}, as observed in~\cite{Gonzalo:2018guu,Cribiori:2022sxf}. In the following, our discussion will align mostly with the first two contexts -- we will study vacua of the above potential with Shenker-like terms included via 
$k(S,\bar{S})$.

\subsection{Extrema of the Two-Modulus Model}

In this section, we study the vacua of the two-modulus heterotic model with the potential in \cref{eq:pot1}. Several aspects of the vacua have been studied in great detail in~\cite{Font:1990nt,Cvetic:1991qm, Gonzalo:2018guu}. We now review relevant details of those discussions.

An important feature of the scalar potential in \cref{eq:pot1} is that it is a modular function -- that is, the scalar potential is invariant under the $\pslz$ transformation defined in~\cref{eq:slztransf}. In addition to restricting the form of the non-perturbative superpotential, the power of modular symmetry also guides the search for vacua. Since $dT$ transforms as $dT\rightarrow (c T+d)^{-2}dT$ under modular transformations, $\partial V/\partial T$ is a weight $(2,0)$ non-holomorphic modular form. As shown in \cref{app:modularforms}, weight $(2,0)$ modular forms vanish at the $\pslz$ fixed points $T=i$ and $T=e^{2\pi i/3}\equiv \rho$. Thus
\be
        \partial_T V(S,T)\rvert_{T=i,\rho} =0\coma
\ee
and the fixed points are always extrema in the $T$-sector. Modular symmetry also simplifies the  analysis of the critical points at the fixed points -- the mixed derivatives of $S$ and $T$ are also weight $(2,0)$ modular forms, and so
\be
    \partial_S\partial_T V(S,T)\rvert_{T=i,\rho} =\partial_{\bar{S}}\partial_T V(S,T)\rvert_{T=i,\rho} = 0\fstop
\ee
The K\"{a}hler modulus sector can also have critical points away from the fixed points, but such points are more difficult to analyze since modular symmetry does not assist and their treatment must be purely numerical. Thus we can partially categorize extrema by the value of their K\"{a}hler modulus vev as follows:
    \begin{align}
        &\text{Class 1: }\,\quad \hspace{0.01cm}\langle T\rangle =i\coma\\
        &\text{Class 2: }\,\quad \langle T\rangle =\rho\coma\\
        &\text{Class 3: }\,\quad \langle T\rangle \neq i,\rho\fstop
    \end{align}
It was conjectured~\cite{Cvetic:1991qm} that all critical points lie on either the boundary of the fundamental domain of $T$ or the line $\text{Re}(T)=0$.  However,~\cite{Novichkov:2022wvg} disputes this conjecture by finding minima inside the fundamental domain and close to the fixed point $\rho$. We reinforce these results by finding multiple saddle points inside the fundamental domain, as discussed in \cref{sec:outerrim}.

To completely understand the vacua, we must also consider the dilaton sector. Our goal will be to examine dS vacua, which can only be achieved if one or both of the F-terms
\begin{align}
    F_S &=\frac{H(T)}{\eta^6(T)}(\Omega_S+k_S\Omega)\equiv \frac{H(T)}{\eta^6(T)}\widetilde{F}_S\coma\\
    F_T &= \frac{\Omega(S)}{\eta^6(T)}\bigg(H_T - \frac{3i}{2\pi}\widehat{G}_2 H\bigg)\equiv \frac{\Omega(S)}{\eta^6(T)}\widetilde{F}_T\coma
\end{align}
are non-zero. We have also introduced the re-scaled F-terms $\widetilde{F}_S$ and $\widetilde{F}_T$ for later convenience. We can then further categorize vacua according to whether or not they force the dilaton F-term to vanish:
    \begin{align}
        &\text{Class A: }\,\quad \widetilde{F}_S=\Omega_S+k_S\Omega=0\coma\\
        &\text{Class B: }\,\quad \widetilde{F}_S \neq 0,\nonumber\\
                                &\hspace{2.1cm}\Omega_{SS} =\bar{\Omega}e^{2i\sigma}k_{S\bar{S}}\left(2-\frac{1}{\lvert H\rvert^2}\widehat{V}(T,\tbar)\right)+\left(\frac{k_{SS\bar{S}}}{k_{S\bar{S}}}-k_S\right)\widetilde{F}_S - k_{SS}\Omega -k_S\Omega_S\label{eq:classBcond}\fstop
    \end{align}
Here we have defined $\sigma = \text{arg}(\Omega_S + k_S \Omega)$. Note that both of the above conditions solve $\partial_S V(S,T)=0$, and it is assumed that $H(T)$ is non-vanishing for Class B extrema.

The Class A extrema simply correspond to the vanishing of the dilaton F-term. Thus to have any hope of achieving a vacuum with positive energy, along this branch we demand $F_T\neq 0$. If we again consider the simple case of a single gaugino condensate, such an extremum would require a non-physical negative string coupling constant. This is the motivation behind racetrack models, where the Class A solution corresponds to stabilization of the dilaton by balancing two gaugino condensates against one another. With $H(T)=1$, a vacuum exists for Class A solutions at the point $T\simeq 1.23i$, which is the typical stabilized value found in heterotic models.  
Note that for Class A solutions, the Hessian is block diagonal, independent of the value of $T$. This point is discussed in more detail in the next subsection and is a crucial aspect of the no-go theorem we prove there.

For Class B solutions, we can allow $F_T$ to vanish since positive energy could be achieved by the dilaton sector. These solutions are somewhat more unfamiliar as they do not follow this simple picture arising from racetracks and one must introduce a means to stabilize the dilaton  and generate a non-zero F-term. This can be achieved by the Shenker-like terms in $k(S,\bar{S})$. The study of these extrema is the subject of \cref{sec:vacua} and mechanisms to generate these vacua are discussed in \cref{sec:npeffects}.

Using the above categories, we organize potential dS extrema of the two-modulus theory into 6 classes: Class A-1, A-2, and A-3 extrema, which are SUSY-preserving in the dilaton direction, and Class B-1, B-2, and B-3 extrema, which instead break SUSY in the dilaton sector. Further refinements of each class are possible since the type of critical point in general depends on the integers $m$ and $n$ and the polynomial $\mathcal{P}(j(T))$ in \cref{eq:Hpara}. 

The Class A-1, A-2, and A-3 extrema were examined in~\cite{Cvetic:1991qm,Gonzalo:2018guu}. If one assumes that the dilaton is stabilized, then the analysis of these extrema reduces to examination of the K\"{a}hler modulus sector. The authors of \cite{Cvetic:1991qm,Gonzalo:2018guu} prove that the fixed points are never dS minima -- that is, there are no dS minima in Class A-1 and A-2 extrema. As for Class A-3,~\cite{Gonzalo:2018guu} argues that numerically they do not find dS minima and conjecture that they do not exist. We now verify this conjecture by proving a no-go theorem.

\subsection{Class A de Sitter No-Go Theorem} 
We now prove a no-go theorem that illustrates the impossibility of obtaining dS vacua via Class A solutions of the two-modulus model above. As a corollary, we will verify and extend the results of~\cite{Gonzalo:2018guu}. 

\clearpage

\begin{theorem}
 At a point $(T_0,S_0)$, the scalar potential $V(T,S)$ in~\cref{eq:pot1}\\ 
 \indent\hspace{1.8cm}can not simultaneously satisfy:
\begin{enumerate}[label=(\roman*).,leftmargin=4.8\parindent]
    \item $V(T_0,S_0) > 0$
    \item $\partial_SV(T_0,S_0)= 0 \quad \&  \quad \partial_TV(T_0,S_0) = 0 $
    \item $(\Omega_S+k_S\Omega)\rvert_{S=S_0}=0$
    \item Eigenvalues of the Hessian of $V(T,S)$ at $(T_0,S_0)$ are all $\ge 0$.\vspace{0.2cm}
\end{enumerate}
\end{theorem}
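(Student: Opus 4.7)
The plan is to leverage the Class A condition to block-diagonalize the Hessian and then extract a tachyon from the $(S,\bar S)$-block. First I would note that (iii) delivers $F_S=0$ at the point along with the identities $W_S=-k_SW$ and $\partial_S F_T=(\Omega_S/\Omega)F_T=-k_S F_T$ at Class A, the latter following from $K_{ST}=0$ in our factorized K\"ahler potential. Plugging these into the expansion of $\partial_S V$ yields cancellations that force $\partial_S V|_A=0$ automatically, so (ii) collapses to the single condition $\partial_T V=0$. Moreover, since $F_S(S_0,T)\equiv 0$ as a function of $T$, differentiating $\partial_S V(S_0,T)=0$ in $T$ (or $\bar T$) gives $V_{ST}|_A=V_{S\bar T}|_A=0$; the Hessian thus splits cleanly into an $(S,\bar S)$-block and a $(T,\bar T)$-block.

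Next I would evaluate the $(S,\bar S)$-block at Class A. Using the identities above together with the cross-derivative $\partial_{\bar S}\bar F_{\bar S}|_A=(\bar H/\bar\eta^6)\bar{\hat A}$, where $\hat A\equiv\Omega_{SS}+(k_{SS}-k_S^2)\Omega$, and $\partial_S\bar F_{\bar S}=k_{S\bar S}\bar W$ (easy to overlook, but supplies an essential $k_{S\bar S}w$ term in $V_{S\bar S}$), careful bookkeeping produces
\begin{equation*}
V_{SS}|_A=\Delta(V+2w),\qquad V_{S\bar S}|_A=k_{S\bar S}(V+w)+\frac{w}{k_{S\bar S}}|\Delta|^2,
\end{equation*}
where $\Delta\equiv\hat A/\Omega=\Omega_{SS}/\Omega+k_{SS}-k_S^2$ and $w\equiv e^{\mathcal{K}}|W|^2$.

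The real-Hessian eigenvalues of the $(S,\bar S)$-block are $2(V_{S\bar S}\pm|V_{SS}|)$, so (iv) requires $V_{S\bar S}\ge|V_{SS}|$. A short algebraic rearrangement delivers the clean factorization
\begin{equation*}
V_{S\bar S}^2-|V_{SS}|^2=k_{S\bar S}^2\,(1-a^2)\bigl[V+w(1-a)\bigr]\bigl[V+w(1+a)\bigr],\qquad a\equiv|\Delta|/k_{S\bar S},
\end{equation*}
which is strictly negative in the window $1<a<1+V/w$, signalling a tachyon in the $S$-sector. The complementary regimes $a\le 1$ and $a\ge 1+V/w$ I would close through a parallel analysis of the $(T,\bar T)$-block: at Class A the $T$-sector is effectively a one-modulus no-scale supergravity with $W_{\mathrm{eff}}(T)=\Omega(S_0)H(T)/\eta^6(T)$, and the no-scale identity $K^{T\bar T}|K_T|^2=3$ combined with $\partial_T V=0$ should force a negative-mass-squared direction in the $T$-sector exactly where the $S$-block survives. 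The main obstacle is this $T$-sector closure and the bookkeeping tying the $S$- and $T$-block positivity conditions to a common inconsistent regime when $V>0$; the algebraic $S$-block calculation sketched above is the straightforward part.
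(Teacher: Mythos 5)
There is a genuine gap. Your block-diagonalization of the Hessian and the automatic vanishing of $\partial_S V$ at Class A points match the paper, but the tachyon you then hunt for lives in the wrong block. Your $(S,\bar S)$-block factorization, even granting the algebra, only produces a negative eigenvalue in the window $1<a<1+V/w$; outside that window the $S$-block can be genuinely positive definite (this is exactly what happens in a racetrack, where $F_S=0$ and the dilaton sits at a true minimum of its own potential), so no amount of $S$-sector bookkeeping can close the proof. The entire content of the theorem therefore sits in the $T$-sector closure that you defer as ``the main obstacle,'' and your sketch of it is also off target: the $T$-block instability is not correlated with ``where the $S$-block survives'' --- it holds unconditionally whenever $V>0$ and $F_S=0$, which is why the paper never needs the $S$-block at all.

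The missing step is the following computation. Write $V(T_0,S_0)=e^{k_0}|\Omega_0|^2 Z_0\Lambda^4$ with $\Lambda^4>0$, solve this for $H_T(T_0)$ and solve $\partial_T V=0$ for $H_{TT}(T_0)$, then substitute into $\partial_T\partial_{\bar T}V$. After imposing $A(S_0,\bar S_0)=0$ (forced by (iii)) one finds
\begin{equation*}
\partial_T\partial_{\bar T}V(T_0,S_0)=\frac{4\,e^{k_0}|\Omega_0|^2 Z_0\,\Lambda^4}{(T_0-\bar T_0)^2}=-\frac{4\,e^{k_0}|\Omega_0|^2 Z_0\,\Lambda^4}{|T_0-\bar T_0|^2}<0 ,
\end{equation*}
since $(T_0-\bar T_0)$ is purely imaginary and $Z_0>0$. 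Because $\partial_t^2V+\partial_a^2V=4\,\partial_T\partial_{\bar T}V$, the trace of the real $2\times 2$ $T$-block is negative, so it always has a negative eigenvalue, contradicting (iv) with no case analysis and independently of $H(T)$ and of the $S$-sector. Your no-scale intuition for the $T$-sector points in the right direction, but it must be turned into this explicit evaluation; as written, your argument proves the theorem only on a measure-zero slice of its hypotheses.
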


\begin{proof}
The proof proceeds by contradiction -- let us assume that (i)-(iv) are true at $(T_0,S_0)$. The first derivative of $V(T,S)$ with respect to $S$ is
\begin{equation}
    \partial_S V(T,S) = 
    \frac{F_S}{W}V(T,S)+ \left\{e^{k(S,\Sb)} \lvert\Omega(S)\rvert^2\lvert H(T)\rvert^2Z(T,\tbar)\right\}\partial_S A(S,\Sb)\fstop
    \label{eq:VSder}
\end{equation}
Since $F_S \propto \Omega_S +k_S \Omega$, (iii) manifestly implies the vanishing of the first term. The derivative $\partial_SA(S,\Sb)$ contains several terms, but each term is proportional to either $F_S$ or its conjugate, and so the second term above also vanishes. Thus (iii) implies the vanishing of the above derivative at $(T_0,S_0)$. This is simply a verification that (iii) defines Class A extremum. To be consistent with (i), we require that $\widetilde{F}_T(T_0)\neq 0$ and $\Omega(S_0)\neq 0$. Then without loss of generality we can introduce a parameter $\Lambda>0$ and recast (i) as 
\be
V(T_0,S_0) = e^{k_0}\lvert\Omega_0\rvert^2Z_0\Lambda^4\coma\label{eq:firstreq}
\ee
where the subscript denotes evaluation at $(T,S)= (T_0,S_0)$. This can be interpreted as an equation for $H_T(T_0)$ and is solved by 
\be
H_T(T_0) = \frac{3i}{2\pi}H_0\widehat{G}_2(T_0,\tbar_0) \pm \frac{\sqrt{3}i}{T_0-\tbar_0}\bigg(\Lambda^2\! \pm i\sqrt{\lvert H_0\rvert^2\left(3-A(S_0,\bar{S}_0)\right)}\bigg)\fstop
\label{eq:Hprime1}
\ee
In this expression, any of the four sign combinations is valid, and we have taken $A(S_0,\bar{S}_0)<3$. This is not an assumption since technically (iii) implies that $A(S_0,\bar{S}_0)=0$, but we will carry it in our expressions for the moment and take the appropriate limit at the end. Similarly, the $\partial_T V(T_0,S_0)$ requirement of condition (ii) yields an algebraic equation for $H_{TT}(T_0)$ that can be solved. We give this condition in~\cref{app:HTT}. Moving onto condition (iv), we first note that~\cref{eq:VSder} and (iii) imply that  
\be
\partial_T^k\partial_{\tbar}^l\partial_S^{} V(T_0,S_0)=0 \quad\quad \forall\,\, k,\,l\in\mathbb{N}_+\fstop
\ee
Thus the Hessian of $V(T,S)$ is block diagonal, with the blocks corresponding to the $T$ and $S$ sectors. Then the eigenvalues of the Hessian are simply the eigenvalues of the two blocks. We focus on the K\"{a}hler modulus block -- in terms of the real and imaginary components of $T=a+it$, its components are 
\begin{align}
    \partial^2_tV &= 2\partial_T \partial_{\tbar}V -2\text{Re}(\partial_T^2 V)\coma\\
    \partial_a^2V &=2\partial_T \partial_{\tbar}V +2\text{Re}(\partial_T^2 V)\coma\\
     \partial_t\partial_a V &= -2 \text{Im}(\partial_T^2V) \fstop
\end{align}
Computing the $T$ derivatives and plugging in the expressions derived above for $H_T(T_0)$ and $H_{TT}(T_0)$, we find
\begin{align}
    \partial_T\partial_{\tbar} V(T_0,S_0) &= \frac{2e^{k_0}\lvert\Omega_0\rvert^2Z_0}{(T_0-\bar{T}_0)^2}\bigg(\Lambda^4(2-3A(S_0,\bar{S}_0))-\lvert H_0\rvert^2A(S_0,\bar{S}_0)\bigg)\\
   &\longrightarrow  \frac{-2e^{k_0}\lvert\Omega_0\rvert^2Z_0}{-(T_0-\bar{T}_0)^2} \Lambda^4\fstop
\end{align}
In going to the second line, we have enforced the vanishing of $A_0$ as demanded by (iii), the Class A extremum condition. We see that $\partial_T\partial_{\tbar}V(T_0,S_0)<0$ for all values of $T_0$ and functions $H(T)$. Thus, it must be that $\partial_t^2V(T_0,S_0)<0$ or $\partial_a^2V(T_0,S_0)<0$. In either case, the determinant of the K\"{a}hler modulus block of the Hessian is negative. This immediately implies that one of the eigenvalues of the Hessian is negative, in contradiction with (iv). We can also consider the case of a vanishing $T$-block determinant, which could occur if
\be
        \re\!(\partial_T^2 V(T_0,S_0)) = \pm\frac{2e^{k_0}\lvert\Omega_0\rvert^2Z_0}{(T_0-\bar{T}_0)^2} \Lambda^4 \;\; \quad \& \;\;\quad  \im\!(\partial_T^2 V(T_0,S_0))=0 \fstop
\ee
However, the vanishing of the K\"{a}hler sector determinant for this case occurs only because one of its two eigenvalues vanishes. Even in this degenerate case, the other eigenvalue is necessarily nonzero and negative, again in contradiction with (iv). Thus (iv) is incompatible with (i)-(iii) and the theorem is demonstrated. Since conditions (i)-(iv) are the requirements for a Class A dS vacuum, the theorem demonstrates that such vacua are impossible. \end{proof}

\noindent The above result deserves several comments. First, the above is a proof that dS vacua cannot occur anywhere in the fundamental domain of $T$ if $\widetilde{F}_S=0$. This is equivalent to the statement that no non-perturbative superpotential in the class defined by~\cref{eq:superpot1} and~\cref{eq:Hpara} can lift a racetrack vacuum to positive energy. This is true even if Shenker-like effects are present in the K\"{a}hler potential, assuming they enter only through $k(S,\bar{S})$. This is a limited extension of previous no-go results mentioned in the introduction to non-perturbative corrections of strength $\mathcal{O}(e^{-1/g_s})$. Second, we immediately have two straightforward corollaries: 

\vspace{0.2cm}

\begin{corollary}
Class A extrema in the two-modulus model with\\ 
\indent \hspace{2.3cm}$k(S,\bar{S})=-\ln(S+\bar{S})$ can never be dS vacua.
\end{corollary}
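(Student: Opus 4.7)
The plan is to derive this Corollary as an immediate specialization of the Theorem just proved. The Theorem was stated and established for a generic dilaton K\"{a}hler potential $k(S,\bar{S})$ --- including non-perturbative Shenker-like corrections $\delta k$ --- making no structural assumption beyond $T$-independence and the invertibility of $k_{S\bar{S}}$. The tree-level choice $k(S,\bar{S}) = -\ln(S+\bar{S})$ satisfies both requirements trivially, since $k_{S\bar{S}} = (S+\bar{S})^{-2}>0$ on the physical domain $\re S > 0$, and the key identity used in the Theorem --- that every term in $\partial_S A(S,\bar{S})$ carries an explicit factor of $\widetilde{F}_S$ or $\bar{\widetilde{F}}_{\bar{S}}$ --- continues to hold unchanged.

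With that in hand, I would simply read the defining properties of a Class A dS vacuum directly into the four hypotheses of the Theorem: positivity $V(T_0,S_0)>0$ is condition (i); the extremum requirement $\partial_S V = \partial_T V = 0$ is condition (ii); the Class A defining equation $\widetilde{F}_S = \Omega_S + k_S\Omega = 0$ --- which here reads $\Omega_S = \Omega/(S+\bar{S})$ --- is condition (iii); and metastability of the Hessian is condition (iv). Since the Theorem forbids the simultaneous realization of (i)--(iv) for any admissible $k(S,\bar{S})$, it forbids them for the tree-level $k$ in particular, and the Corollary follows with no further computation.

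There is no substantive obstacle here, since all the heavy lifting --- in particular the sign-fixed computation showing $\partial_T\partial_{\bar{T}} V(T_0,S_0) < 0$ along the Class A locus whenever $V(T_0,S_0)>0$ --- is already carried out in the proof of the Theorem. The only remark worth attaching is a brief verification that the statement has content: for a single hidden-sector condensate $\Omega(S) = h + e^{-aS}$ the Class A equation has no solution with $\re S > 0$, so the Corollary is of physical interest mainly in the racetrack setting $\Omega(S) = \sum_a e^{-a_a S}$, where balancing exponentials genuinely produce Class A critical points. In that setting the Corollary asserts that every such racetrack extremum of the tree-level two-modulus model is either non-positive in $V$ or possesses a tachyonic direction in the K\"{a}hler sector, thereby recovering and sharpening the numerical dS no-go conjectures of~\cite{Cvetic:1991qm, Gonzalo:2018guu}.
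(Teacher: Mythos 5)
Your proof is correct and matches the paper's: the corollary is obtained there in exactly this way, by reading the tree-level choice $k(S,\bar{S})=-\ln(S+\bar{S})$ into Theorem~1, whose hypotheses (i)--(iv) are precisely the defining conditions of a Class A dS vacuum, so no further computation is needed. (Your closing physical aside is inessential and slightly overstated --- with an $H_3$-flux constant $h$ of suitable sign the Class A equation $\Omega_S+k_S\Omega=0$ can admit solutions at $\re S>0$ even for a single condensate --- but this does not affect the validity of the corollary.)
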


\begin{corollary}
The one-modulus model with $W(T) = H(T)/\eta^6(T)$ and\\ \indent\hspace{2.3cm}$\mc{K} = -3\ln(-i(T-\tbar))$ can not have dS vacua. \vspace{0.2cm}
\end{corollary}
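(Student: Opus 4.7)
The plan is to derive the corollary as the degenerate-dilaton limit of Theorem~1. I would fix $\Omega(S) \equiv 1$ and $k(S,\bar{S}) \equiv 0$ in the two-modulus framework; these choices make $\widetilde{F}_S = \Omega_S + k_S\Omega = 0$ identically, so the Class A condition (iii) is satisfied at every point, $A(S,\bar{S}) \equiv 0$, and the scalar potential~\cref{eq:pot1} collapses to
\[
V(T,S) \;=\; Z(T,\bar{T})\left[\,\widehat{V}(T,\bar{T}) - 3\lvert H(T)\rvert^{2}\,\right],
\]
which is $S$-independent and coincides with the one-modulus F-term potential built from $\mathcal{K} = -3\ln(-i(T-\bar{T}))$ and $W(T) = H(T)/\eta^6(T)$. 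Matching the two expressions is a short check using $\eta'(T)/\eta(T) \propto E_2(T)$ together with the definition of the non-holomorphic Eisenstein series $\widehat{G}_2$: the non-holomorphic piece of $\widehat{G}_2$ supplies precisely the $\mathcal{K}_T W$ contribution to $F_T = W_T + \mathcal{K}_T W$ that is missing from the naive holomorphic derivative of $W$, producing the modular-covariant combination $H_T - (3i/2\pi)\widehat{G}_2 H$.

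Under this identification, any candidate dS vacuum of the one-modulus model furnishes a point $T_0$ satisfying conditions (i) ($V(T_0)>0$), (ii) ($\partial_T V(T_0)=0$, with the $\partial_S V=0$ equation holding trivially) and (iv) (positive semidefinite Hessian) of Theorem~1, with (iii) automatic. At such a point the $S$-block of the two-modulus Hessian vanishes identically (and is therefore trivially positive semidefinite), so (iv) reduces to positive semidefiniteness of the $T$-block alone. The proof of Theorem~1 establishes that $\partial_T \partial_{\bar{T}} V(T_0) < 0$ whenever $A=0$ and conditions (i), (ii) hold, since the numerator of the boxed expression for $\partial_T \partial_{\bar T} V$ is strictly positive while $(T_0 - \bar{T}_0)^2 = -4(\mathrm{Im}\,T_0)^2 < 0$. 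Because $\mathrm{tr}(\mathrm{Hess}_T V) = 4\,\partial_T\partial_{\bar{T}} V$ is then strictly negative, the $T$-block has at least one strictly negative eigenvalue, contradicting (iv). No such $T_0$ can exist, and the corollary follows.

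The only real step of work is the bookkeeping in the first paragraph: verifying that the two-modulus potential with the trivial choices $\Omega \equiv 1$, $k \equiv 0$ genuinely reproduces the one-modulus potential, and that conditions (i)--(iv) of Theorem~1 translate faithfully. Once this identification is in place the statement is an immediate consequence of the theorem, applied to its $T$-block; no new computation beyond what the proof of Theorem~1 already carries out is required.
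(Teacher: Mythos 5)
Your proposal is correct and matches the paper's own route: the paper proves this corollary precisely by ``taking the appropriate limits to remove the dilaton in the proof of Theorem 1,'' which is exactly the degenerate-dilaton reduction you carry out (your trace argument on the $T$-block is an equivalent repackaging of the paper's determinant/eigenvalue step). The only detail worth noting is that with $k\equiv 0$ the dilaton K\"ahler metric degenerates, so $A=0$ should be understood as the limit in which the $S$-sector decouples entirely --- which is what the paper's phrasing intends.
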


\noindent The latter follows by taking the appropriate limits to remove the dilaton in the proof of Theorem 1. These corollaries verify the conjectures in~\cite{Gonzalo:2018guu} that state that neither Class A extrema nor the single-modulus model can have dS vacua. 

We also observe that the above proof does not make use of the modular properties of $H(T)$ in any meaningful way. We posit that the lack of Class A dS vacua is more tied to the structure of Class A extrema and the factorized form of the superpotential and K\"{a}hler potential.

Next, we note that if we were to assume that $(T_0,S_0)$ was an AdS minimum by replacing $\Lambda^4 \rightarrow -\Lambda^4$, then the no-go theorem does not apply because manifestly $\partial_T\partial_{\tbar}V(T_0,S_0)>0$ -- the very argument that forbids dS minima cleanly allows for AdS minima. 

Finally, the above no-go provides a hint as to how one might obtain dS vacua in these heterotic compactifications -- one must consider Class B solutions where the dilaton F-term is non-vanishing. In the next section, we explore constraints on Class B extrema such that dS vacua exist in the two-modulus model.

\section{Circumventing the No-Go}
\label{sec:vacua}
In this section, we study the branch of extrema that evade the no-go result of the previous section -- namely, Class B extrema. We will assume that the dilaton is stabilized and determine under what conditions the K\"{a}hler modulus sector is stabilized with positive energy. This will translate into bounds on the function $A(S,\bar{S})$, which depends on the Shenker-like effects via $k(S,\bar{S})$. For Class B-1 and B-2 extrema, which occur at the fixed points, this analysis is valid due to the block diagonal structure of the Hessian required by modular invariance, as discussed in~\cref{sec:heterotic}. At general points in the fundamental domain, corresponding to Class B-3 extrema, the Hessian is in general not block diagonal and one must treat the $T$ and $S$ moduli together. For these extrema, we will examine minima of the single-modulus model containing only $T$ and discuss the plausibility of uplifting them via the dilaton subsector. In doing so, we will present evidence disproving a previous conjecture on the nature of extrema in the fundamental domain of $T$. We will turn to the issue of stabilizing the dilaton sector in the following section.
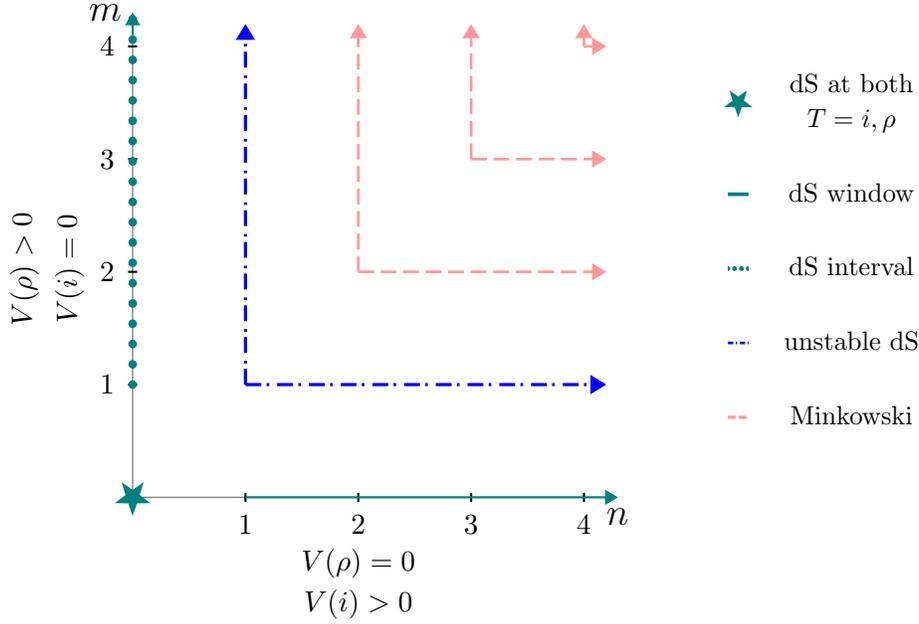
\begin{figure}
    \centering
\begin{tikzpicture}[scale=1.5]
\draw[line width=0.5pt,gray,-Triangle] (0,0) -- (4.3,0) node[below] {\color{black}{\large{$n$}}};
\draw[line width=0.5pt,gray,-Triangle] (0,0) -- (0,4.3) node[left] {\color{black}{\large{$m$}}};

\draw[line width=1.2pt,blue,dash pattern={on 6pt off 2pt on 1pt off 3pt},-Triangle] (1,1) -- (1,4.2);
\draw[line width=1.2pt,blue,dash pattern={on 6pt off 2pt on 1pt off 3pt},-Triangle] (1,1) -- (4.2,1);
\draw[thick] (1,-0.04) -- (1,0.04) node[below,xshift=0cm,yshift=-0.15cm] {\color{black}{\small{$1$}}};
\draw[thick] (2,-0.04) -- (2,0.04) node[below,xshift=0cm,yshift=-0.15cm] {\color{black}{\small{$2$}}};
\draw[thick] (3,-0.04) -- (3,0.04) node[below,xshift=0cm,yshift=-0.15cm] {\color{black}{\small{$3$}}};
\draw[thick] (4,-0.04) -- (4,0.04) node[below,xshift=0cm,yshift=-0.15cm] {\color{black}{\small{$4$}}};
\draw[thick] (-0.04,1) -- (0.04,1) node[left,xshift=-0.15cm,yshift=0cm] {\color{black}{\small{$1$}}};
\draw[thick] (-0.04,2) -- (0.04,2) node[left,xshift=-0.15cm,yshift=0cm] {\color{black}{\small{$2$}}};
\draw[thick] (-0.04,3) -- (0.04,3) node[left,xshift=-0.15cm,yshift=0cm] {\color{black}{\small{$3$}}};
\draw[thick] (-0.04,4) -- (0.04,4) node[left,xshift=-0.15cm,yshift=0cm] {\color{black}{\small{$4$}}};

\node at (2,0)[inner sep=1pt,label={[xshift=0cm,yshift=-1.25cm]:\small{$V\!\left(\rho\right)=0$}}]{};
\node at (2,0)[inner sep=1pt,label={[xshift=0cm,yshift=-1.8cm]:\small{$V\!\left(i\right)>0$}}]{};
\node at (0,2)[inner sep=1pt,label={[xshift=-0.5cm,yshift=0cm,rotate=90]:\small{$V\!\left(i\right)=0$}}]{};
\node at (0,2)[inner sep=1pt,label={[xshift=-1.1cm,yshift=0cm,rotate=90]:\small{$V\!\left(\rho\right)>0$}}]{};

\node at (0,0)[star,star points=5,star point ratio=0.4,fill=teal,inner sep=5pt,label={[xshift=0cm,yshift=-0.7cm,teal]:}]{};



\draw[line width=1pt,red!40,-Triangle,dash pattern={on 6pt off 2.5pt}] (2,2) -- (4.2,2);
\draw[line width=1pt,red!40,-Triangle,dash pattern={on 6pt off 2.5pt}] (2,2) -- (2,4.2);
\draw[line width=1pt,red!40,-Triangle,dash pattern={on 6pt off 2.5pt}] (3,3) -- (4.2,3);
\draw[line width=1pt,red!40,-Triangle,dash pattern={on 6pt off 2.5pt}] (3,3) -- (3,4.2);
\draw[line width=1pt,red!40,-Triangle,dash pattern={on 6pt off 2.5pt}] (4,4) -- (4.2,4);
\draw[line width=1pt,red!40,-Triangle,dash pattern={on 6pt off 2.5pt}] (4,4) -- (4,4.2);


\draw[line width=0.8pt,teal,-Triangle] (1,0) -- (4.3,0);
\draw[line width=0.8pt,teal,-Triangle] (0,4.1) -- (0,4.3) ;
\draw[decorate sep={1.mm}{2.7mm},teal,fill=teal] (0,1) -- (0,4.3) ;

\matrix [draw=white,below left,nodes={anchor=center},column sep = 0.07cm, row sep = 0.25cm] at (7.3,4) {
  \node [star,star points=5,star point ratio=0.4,fill=teal,inner sep=4pt] {};& \node{\footnotesize\newshortstack{dS at both\\ $T=i,\rho$}};\\
  \node[path picture={
    \draw[teal,thick,line width=1.4pt](path picture bounding box.west) -- (path picture bounding box.east);
   }
   ]{}; 
   & \node{\footnotesize\newshortstack{dS window}};\\
   \node[path picture={
    \draw[decorate sep={0.5mm}{0.9mm},teal,fill=teal] (path picture bounding box.west) -- (path picture bounding box.east);
   }]{}; & \node{\footnotesize\newshortstack{dS interval}};\\
   \node[path picture={
    \draw[blue,thick,line width=1pt,dash pattern={on 2pt off 1pt on 0.5pt off 1pt}] (path picture bounding box.west) -- (path picture bounding box.east);}]{}; & \node{\footnotesize\newshortstack{unstable dS}};\\
   \node[path picture={
    \draw[red!40,thick,line width=1.2pt,dash pattern={on 3pt off 1pt}] (path picture bounding box.west) -- (path picture bounding box.east);
   }]{}; & \node{\footnotesize\newshortstack{Minkowski}};\\
};
\end{tikzpicture}
\caption{Summary of dS vacua possibilities at the fixed points with respect to the integers $(m,n)$ in $H(T)$ assuming the dilaton is stabilized. At $(0,0)$, both fixed points can have a dS vacuum. For $(m,0)$, $T=\rho$ is a dS minimum if $A(S,\bar{S})>3$. For $(0,n)$, $T=i$ can be a dS vacuum for a window of $A(S,\bar{S})$ values that increases with $n$ and is dependent on $\mathcal{P}(j(T))$. The cases $(1,n)$ or $(m,1)$ result in unstable dS extrema. Finally, we always have Minkowski extrema at the self dual points when $(m,n)>(1,1)$. }
\label{fig:divinecomedy}
\end{figure}

\subsection{Class B-1 Vacua: $T=i$}
We start by examining extrema at the fixed point $T=i$. 
If the integer $m$ in \cref{eq:superpot0} is greater than one, then both $H(T)$ and $\widehat{V}(T,\tbar)$ vanish at $T=i$ and so the extremum is Minkowski.\footnote{This follows from the properties of the Eisenstein functions $G_4$ and $G_6$ entering the definition of $H(T)$: $G_4(\rho)=G_6(i)=0$ while $G_4(i)$ and $G_6(\rho)$ are non-zero.} Therefore, in the remainder of this section we will study the potential for the cases $m=0$ and $m=1$.

\paragraph{m = 0 :}
The potential evaluated at $T=i$ has the compact form 
\be
  V(S,\sbar,i,-i) = \frac{2^{4n+9}\pi^{8n+9}}{3^{2n}5^{2n}\Gamma^{12}(1/4)}\lvert\Omega(S)\rvert^2\lvert\mc{P}(1728)\rvert^2 e^{k(S,\sbar)}\left(A(S,\Sb)-3\right)\coma
\ee
and we see then that a dS extremum requires $A(S,\Sb)>3$. To examine the stability of the extremum, we calculate the second order derivatives in the fields $t,\,a$ from $T=a+it$ as
\be\begin{split}
\partial_{t/a}^2V\rvert_{T=i}= &\frac{4^{2n+4}\pi^{8 n+9} e^{k(S,\Sb)}}{3^{2 n-1}5^{2n}\Gamma^{12}(1/4)}\lvert\Omega(S)\rvert^2\lvert\mc{P}(1728)\rvert^2\bigg[(A(S,\Sb)-2)\,\\
&\left.\pm(A(S,\Sb)-1)\frac{\Gamma^8(1/4)}{192\pi^4}\left(\!1+8n+41472\,\re\!\left(\!\frac{\mc{P}^\prime(1728)}{\mc{P}(1728)}\!\right)\!\right)\!+\lvert\mc{B}_n\rvert^2\right]\,,
\end{split}
\ee
where the plus sign refers to the derivatives in $t$ and the minus to the ones in $a$ and we have defined
\be\label{eq:definitionB}
\mc{B}_n\equiv\frac{\Gamma^8(1/4)}{192\pi^4}\left(1+8n+41472\,\frac{\mc{P}^\prime(1728)}{\mc{P}(1728)}\right)\coma
\ee
and the mixed derivative reads
\be
\partial_{t,a}^2V\rvert_{T=i}= \frac{4^{9+2n }\pi^{5+8n}2\,e^{k(S,\Sb)}}{3^{2n}5^{2n}\Gamma^4(-3/4)}\lvert\Omega(S)\rvert^2\lvert\mc{P}(1728)\rvert^2\!\left(A(S,\Sb)-1\right)\im\!\!\left(\frac{\mc{P}^\prime(1728)}{\mc{P}(1728)}\right)\fstop
\ee
The general conditions on $A(S,\bar{S})$ are complicated and involve a number of subcases. We describe them in detail in~\cref{app:icases}. As a concrete example, we set $\mathcal{P}(j(T))=1$. The condition for a minimum corresponds to the case $\mc{B}_n>1$ found in~\cref{app:icases} and takes the form 
\be
    2-\frac{(1+8n)\Gamma^8(1/4)}{192\pi^4}<A(S,\bar{S})<2+\frac{(1+8n)\Gamma^8(1/4)}{192\pi^4}\fstop
\ee
If we set $n=0$ -- the trivial case of $H(T)=1$ -- we find a range for $A(S,\bar{S})$:
\begin{align}\label{eq:AP1n0}
    0.4035<A(S,\sbar)<3.5964 \fstop
\end{align}
Thus there is a narrow window in which it is possible to have at the point $T=i$ a dS minimum. However, we note several interesting features. If $n>0$, then the left boundary of stability goes to negative $A(S,\sbar)$ and so the extremum is a minimum for all positive values of $A(S,\bar{S})$. Furthermore, as $n$ increases, the window for dS minima grows. We display examples of this window of stability for $\mathcal{P}(j(T))=1$ and $\mathcal{P}(j(T))=1+j(T)$ in~\cref{fig:ClassB1Hess}. 

\begin{figure}
 \centering 
 \includegraphics[width=\textwidth,keepaspectratio]{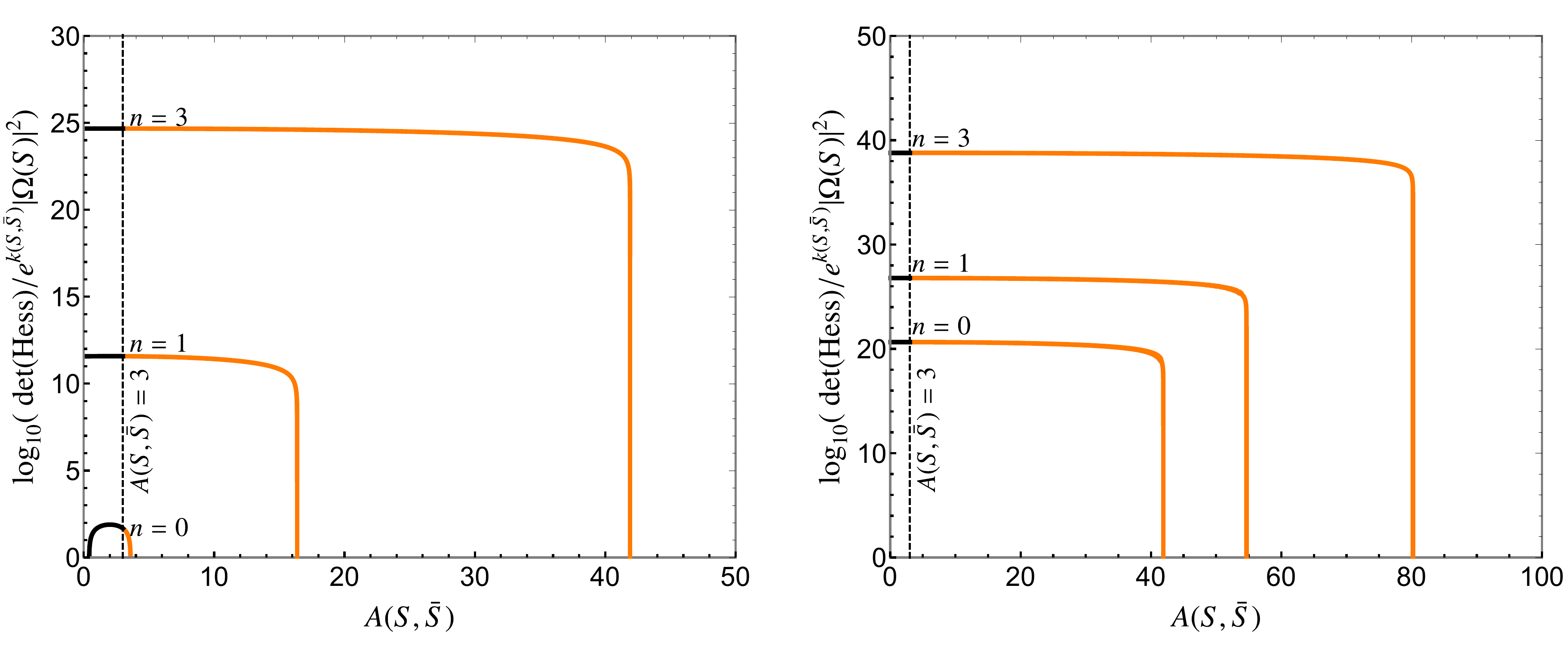}
 \caption{Determinants of the K\"{a}hler modulus block of the Hessian at $T=i$ with $(m,n)=(0,n)$ and polynomials given by $\mathcal{P}(j(T))=1$ (left) and $\mathcal{P}(j(T))=1+j(T)$ (right). The orange segments correspond to dS vacua assuming a stabilized dilaton. The vertical asymptote indicates the point where the determinant crosses to negative values.}
\label{fig:ClassB1Hess}
\end{figure}

\paragraph{m = 1:}
This value for $m$ is particularly intriguing because the potential at $T=i$ is positive for $\Omega(S)\neq 0$ and does not depend on $A(S,\Sb)$:
\be
 V(S,\sbar,i,-i)=\frac{2^{4 n+13}\pi^{8n+17}e^{k(S,\sbar)}}{49\times3^{2n+3}5^{2n+2}\Gamma^4(1/4)}\lvert \Omega(S)\rvert^2\lvert \mc{P}(1728)\rvert^2\fstop
\ee
The second derivatives read
\be
\begin{split}
\partial_{t/a}^2V\rvert_{T=i}=&\frac{2^{4n+12}\pi^{8n+17}e^{k(S,\sbar)}}{5^{2n+2}3^{2n+2} 49 \Gamma^4(1/4) }\lvert \Omega(S)\rvert^2\lvert\mc{P}(1728)\rvert^2\\
&\times\left[3 A(S,\Sb)-2\pm\frac{\Gamma^8(1/4)}{192\pi^4}\!\left(\!55+72n+373248\,\re\!\!\left(\!\frac{\mc{P}^\prime(1728)}{\mc{P}(1728)}\!\right)\!\right)\!\right]\,,
\end{split}
\ee
\be
\partial_{t,a}^2V\rvert_{T=i}=\frac{2^{4n+15}\pi^{8n+13}e^{k(S,\sbar)}\Gamma^4(1/4)}{ 5^{2n+2}3^{2n-2} 49}\lvert \Omega(S)\rvert^2\lvert\mc{P}(1728)\rvert^2\im\!\!\left(\frac{\mc{P}^\prime(1728)}{\mc{P}(1728)}\right)\coma
\ee
and requiring the eigenvalues of the Hessian to be positive leads to the condition
\be\label{eq:AforTim1}
    A(S,\Sb)>\frac{2}{3}+\frac{\Gamma^8(1/4)}{576 \pi^4}\left[55+72 n+373248\bigg\lvert\frac{\mc{P}^\prime(1728)}{\mc{P}(1728)}\bigg\rvert\right]\fstop
\ee
However, the dilaton derivative is
\be
\partial_S V\rvert_{T=i} = \frac{2^{4 n+13} \pi ^{8 n+17} \lvert \mathcal{P}(1728)\rvert^2}{49\times 3^{2 n+3}5^{2n+2} \Gamma ^4(1/4)} e^{k(S,\Sb)}\bar{\Omega}(\bar{S})(\Omega_S +k_S\Omega )\fstop
\ee
Discarding the $\Omega(S) = 0$ Minkowski solution, the above indicates that stabilizing the dilaton actually forces us into a Class A extremum. Hence the case $m=1$ is never a dS vacuum at $T=i$, in agreement with the tree-level K\"{a}hler potential analysis in~\cite{Gonzalo:2018guu}.

\subsection{Class B-2 Vacua: $T=\rho$}
\label{sec:typeIIvacua}
For $T=\rho$, we must set $n<2$ or else all extrema will be Minkowski due to the vanishing of $G_4(\rho)$. We again separate the discussion for the case of $n=0$ and $n=1$.
\begin{figure}
\centering
    \includegraphics[width=0.6\textwidth]{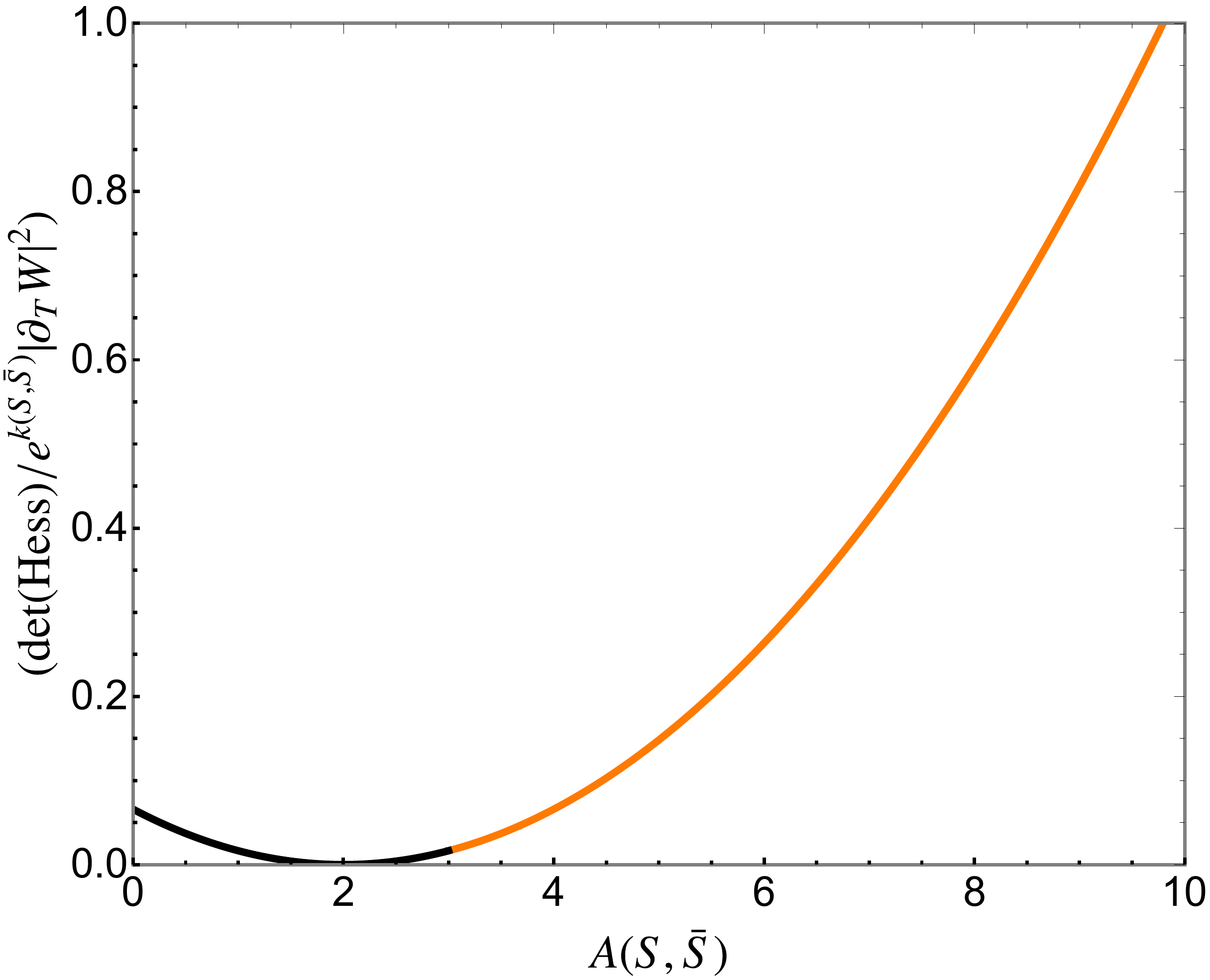}
  \caption{Rescaled Hessian determinant of the K\"{a}hler modulus block at $T=\rho$. Again the orange segments correspond to dS vacua assuming a stabilized dilaton.} 
  \label{fig:dethessianTrho}
\end{figure}
\paragraph{n = 0 :}
The value of the potential is then 
\be
  V(S,\sbar,\rho,\rho^*) = \frac{4^{4m+6}\pi^{12m+12}e^{k(S,\sbar)}}{ 3^{3m+3}1225^m\Gamma^{18}(1/3)}\lvert\Omega(S)\rvert^2\lvert\mathcal{P}(0)\rvert^2 \left(A(S,\Sb)-3\right)\coma
\ee
and the non-zero second order derivatives are
\be
\partial_t^2V\rvert_{T=\rho}= \partial_a^2V\rvert_{T=\rho} = \frac{2^{8m+13}\pi^{12m+12}e^{k(S,\Sb)}}{3^{3m+3} 1225^{m}\Gamma^{18}(1/3)}\lvert\Omega(S)\rvert^2\lvert\mathcal{P}(0)\rvert^2\left(A(S,\Sb)-2\right)\coma
\ee
so that the Hessian in the $T$ sector is 
\be\label{eq:detTrho}
     \text{det}(\text{Hess})\rvert_{T=\rho} = (\partial_t^2V)^2\rvert_{T=\rho} \fstop
\ee
Thus, we see that the determinant of the Hessian is always positive (assuming the dilaton is stabilized), and the eigenvalues are positive for $A(S,\Sb)>2$. Hence, there is a dS minimum for $A(S,\Sb)>3$. This is depicted in~\cref{fig:dethessianTrho}.

\paragraph{n = 1 :}
As in the previous section, for this value of $n$ the potential is always positive and does not depend on $A(S,\Sb)$:
\be
  V(S,\sbar,\rho,\rho^*) = \frac{4^{4m+6}\pi^{12m+14}e^{k(S,\sbar)}}{ 3^{3m+5}5^{2m+2}49^m\Gamma^{6}(1/3)}\lvert\Omega(S)\rvert^2\lvert\mathcal{P}(0)\rvert^2 \fstop
\label{eq:dilquint}
\ee
Also as above, $\partial_S V\rvert_{T=\rho}\propto \Omega_S+k_S\Omega$ and so it is not possible to stabilize the dilaton with $A(S,\bar{S})>0$. On the other hand, the determinant of the K\"{a}hler modulus block of the Hessian is 
\be
\partial_t^2V\rvert_{T=\rho}= \partial_a^2V\rvert_{T=\rho} = \frac{2^{8m+7}\pi^{12m+14}e^{k(S,\Sb)}}{5^{2m+2} 1323^{m}\Gamma^{6}(-2/3)}\lvert\Omega(S)\rvert^2\lvert\mathcal{P}(0)\rvert^2\left(3 A(S,\Sb)-2\right)\fstop
\ee
So long as $A(S,\bar{S}) >2/3$, the K\"{a}hler modulus sector is stabilized. This scenario cannot give dS vacua, but it does yield an intriguing model for dilaton quintessence: if the dilaton has an initial field value such that $A(S,\bar{S}) >2/3$, then as the dilaton runs towards weak coupling, the cosmological constant in~\cref{eq:dilquint} decreases but the compact dimensions are stabilized. Indeed if we assume that $\Omega(S)\sim \sum_a e^{-S/b_a}$  and the dilaton K\"{a}hler potential is $k(S,\bar{S}) = -\ln(S+\bar{S}) + \mathcal{O}(e^{-\sqrt{S}})$, then $A(S,\bar{S})$ grows with $S$ and the stability of $T$ is maintained. We leave a detailed study of the phenomenology and cosmology of this scenario to future studies. 

\subsection{Class B-3 Vacua: Into the Fundamental Domain}
\label{sec:outerrim}
We now step away from the fixed points and proceed elsewhere in the fundamental domain of $\pslz$. As mentioned above, the stability of Class B extrema away from the fixed points are in general difficult to analyze since the Hessian is not required to be block diagonal and a thorough analysis requires a full treatment of the dilaton sector. Our focus here will be a study on the nature of extrema in the K\"{a}hler modulus sector in the absence of a dilaton. We will comment on re-introduction of the dilaton at the end of the section.

It was previously conjectured~\cite{Cvetic:1991qm} that all extrema of the $T$ sector must lie on the union of the boundary points included in $\widetilde{\mathcal{F}}$ and the line $\text{Re}(T)=0$. The argument for this is that the scalar potential in~\cref{eq:pot1} displays an additional $\mathbb{Z}_2$ symmetry by swapping $T\leftrightarrow - \bar{T}$. The line $\text{Re}(T)=0$ consists of fixed points under this $\mathbb{Z}_2$, while the other boundary points in $\widetilde{\mathcal{F}}$ are fixed by the combined actions of $\mathbb{Z}_2$ and $\pslz$ transformations. Assuming smooth behavior on the union of these curves, the scalar potential is then extremized in at least one direction on the union of these curves. The conjecture utilizes this observation to posit that all extrema of the $T$ sector lie on the above union of curves.
 \begin{figure}[t]
    \centering
\includegraphics[width=0.75\textwidth,keepaspectratio]{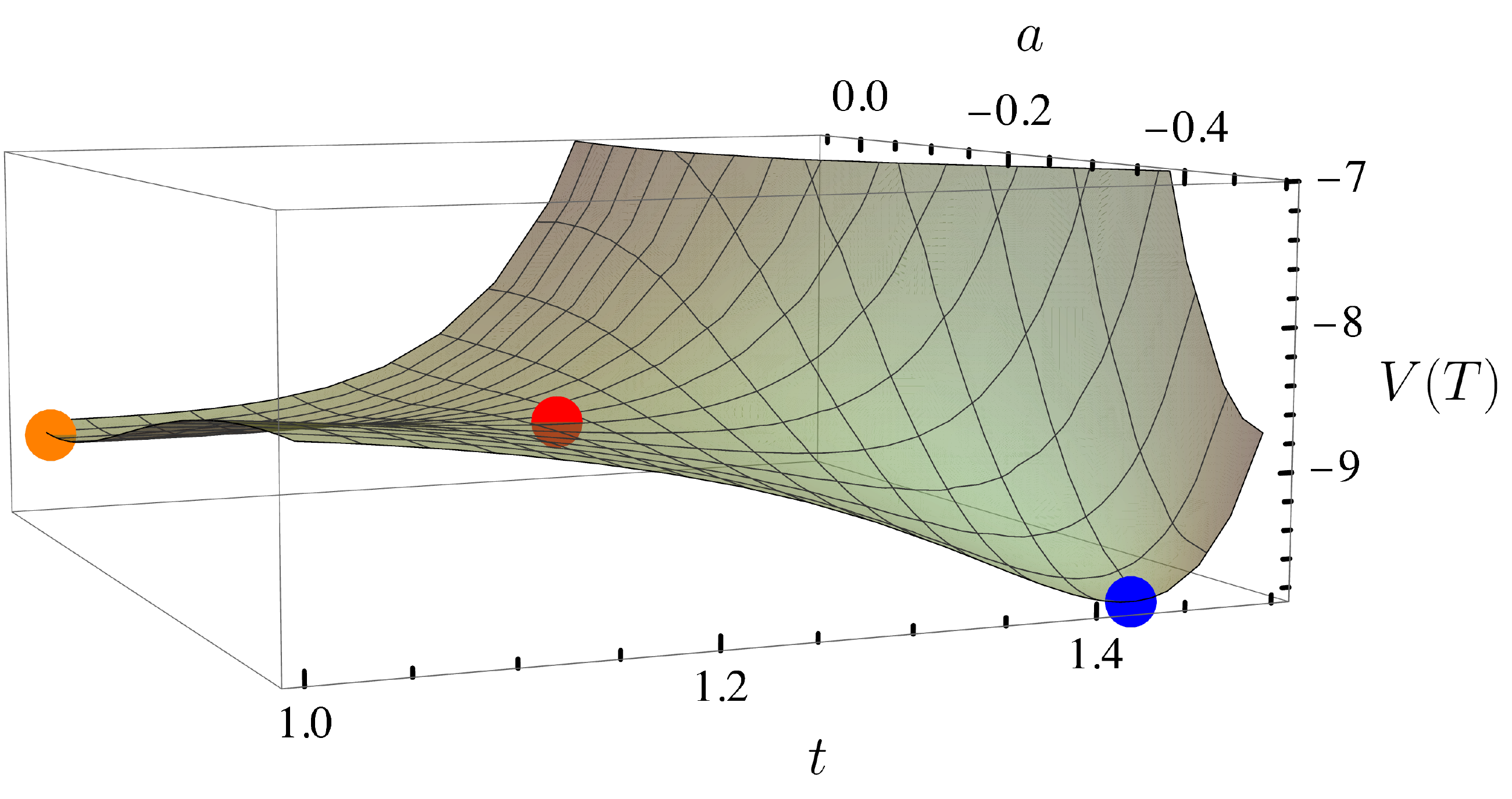}
\caption{Potential with two minima and a saddle interpolating between them. The orange dot is a minimum at $T=i$, the blue dot is a minimum on the line $\text{Re}(T)= -\frac{1}{2}$, and the red dot signals the saddle point inside the fundamental domain.}
    \label{fig:saddle_minima}
\end{figure}

We now argue that this is not the case. We will restrict ourselves to a simple parametrization of $H(T)$:
\be
  H_1(T)=\bigg(\frac{G_4(T)}{\eta^8(T)}\bigg)^n\bigg(\frac{G_6(T)}{\eta^{12}(T)}\bigg)^m(1+\beta \,j(T))\fstop
\label{eq:H1}  
\ee
We then extremize~\cref{eq:pot1} with respect to $T$ and set $A(S,\bar{S})=0$. By scanning different values of $(m,n)\leq(5,5)$ and $\beta$, we found extrema for $\beta< 10^{-2}$ throughout the fundamental domain. First, we find new saddle point extrema inside the fundamental domain. This can be expected, as once one has a minimum on the boundary and another minimum at one of the fixed points, an extremum must exist which interpolates between these points. We show an example of such situation in \cref{fig:saddle_minima} where we have taken $(m,n)= (0,0)$ and $\beta = 10^{-5}$ in~\cref{eq:H1}.
Second, we confirm the results of~\cite{Novichkov:2022wvg}: close to the fixed point $T=\rho$, and for $\mc{P}(j(T))=1$, there exists a minimum which tends to $T=\rho$ at increasing values of $m$. In \cref{fig:fund_dom_minima} we show the first three concentric minima. Thus it appears that the conjecture in~\cite{Cvetic:1991qm} does not hold in general. However, this hints at the exciting possibility of a larger-than-anticipated \textit{modular landscape} of heterotic vacua. We leave a thorough search for further vacua in the fundamental domain using general $H(T)$ to future studies.

\begin{figure}[t]
    \centering
\begin{tikzpicture}[scale=1]
\node (plt) at (0,0) {\includegraphics[width=0.8\textwidth]{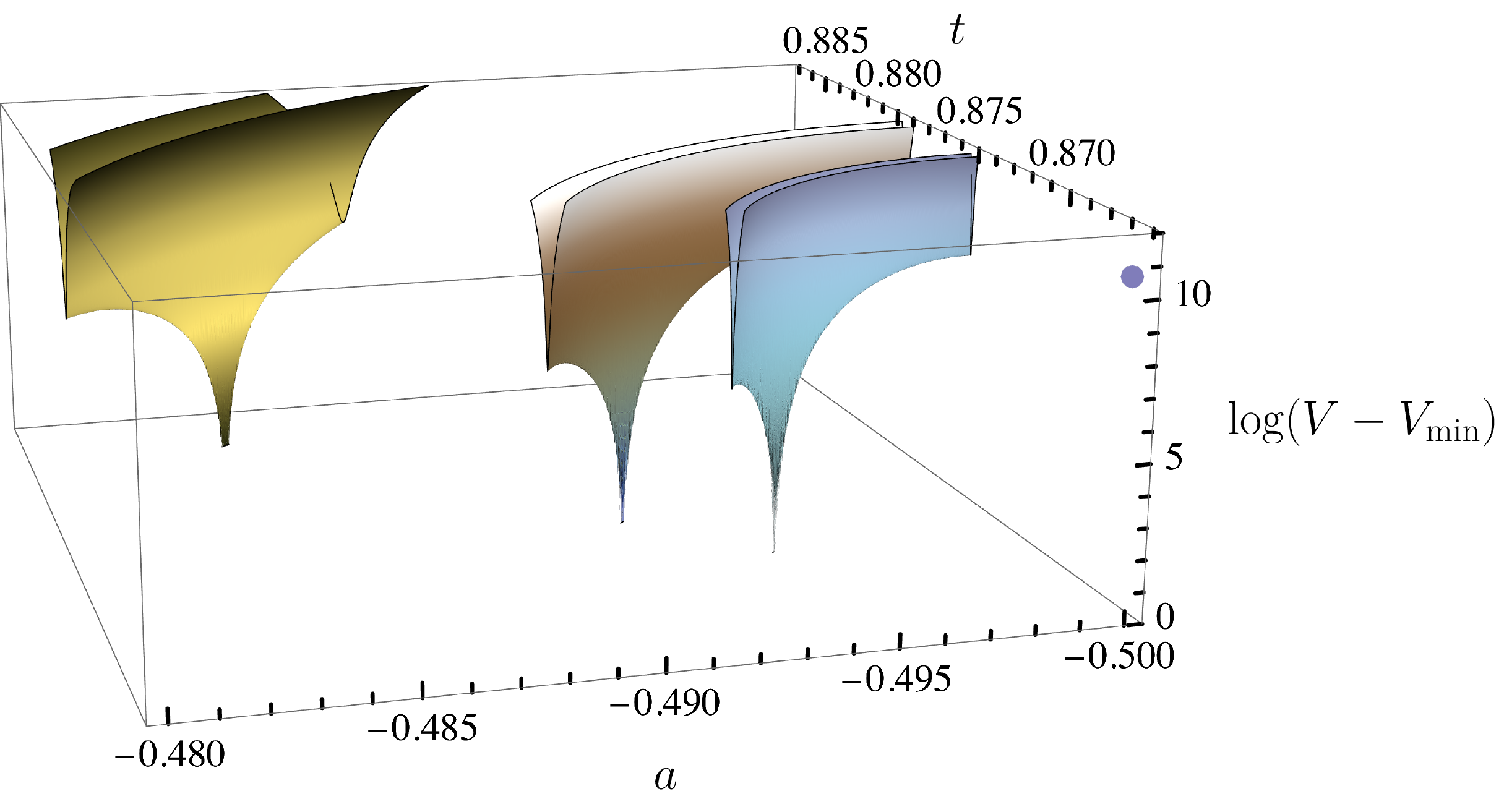}};
 \node[node distance=0cm,yshift=2.8cm,xshift=-4.3cm] {\small{$m=1$}};
\node[node distance=0cm,yshift=2.4cm,xshift=-0.55cm] {\small{$m=2$}};
\node[node distance=0cm,yshift=1.8cm,xshift=1.2cm,white] {\small{$m=3$}};
   \end{tikzpicture}
\caption{Minima inside the fundamental domain close to the fixed point $T=\rho$ at varying $m$ and fixing $n=0$ in the function $H(T)$. The higher $m$, the closer the minimum is to the point $\rho$, which in this setup is a maximum and is represented by the blue dot. Each plot is appropriately rescaled.}
    \label{fig:fund_dom_minima}
\end{figure}

Other than the above exceptional cases, we also found many minima that lie on the boundary points of $\widetilde{\mathcal{F}}$ and $\text{Re}(T)=0$. As anticipated by Theorem 1 and Corollary 1.2, all of the minima are AdS. However, in general, vacua away from the fixed points have non-vanishing values for $F_T$. Thus the criterion for positive vacuum energy in Class B-3 vacua is more relaxed as we only require:
 \be\label{eq:boundouterrim}
    A(S,\Sb)>3-\frac{\widehat{V}(T,\bar{T})}{\lvert H(T)\rvert^2}\fstop
\ee
A simple realization of the aforementioned modular landscape utilizing this condition can be seen in \cref{fig:outerrimpot}, where we give an example of an AdS minimum which is uplifted to dS once the requirement in \cref{eq:boundouterrim} is met. Note that this does not destabilize the fixed point $T=\rho$, which remains a Minkowski minimum.  

To truly confirm that Class B-3 dS vacua are possible, one must analyze the $S$ and $T$ sectors together. We have done this for several examples utilizing the framework on Shenker-like effects in the following section. We did not perform a general scan, but in principle it is indeed possible to stabilize Class B-3 vacua with positive vacuum energy utilizing the Shenker-like terms.

\begin{figure}[t]
    \centering
\includegraphics[width=0.7\textwidth]{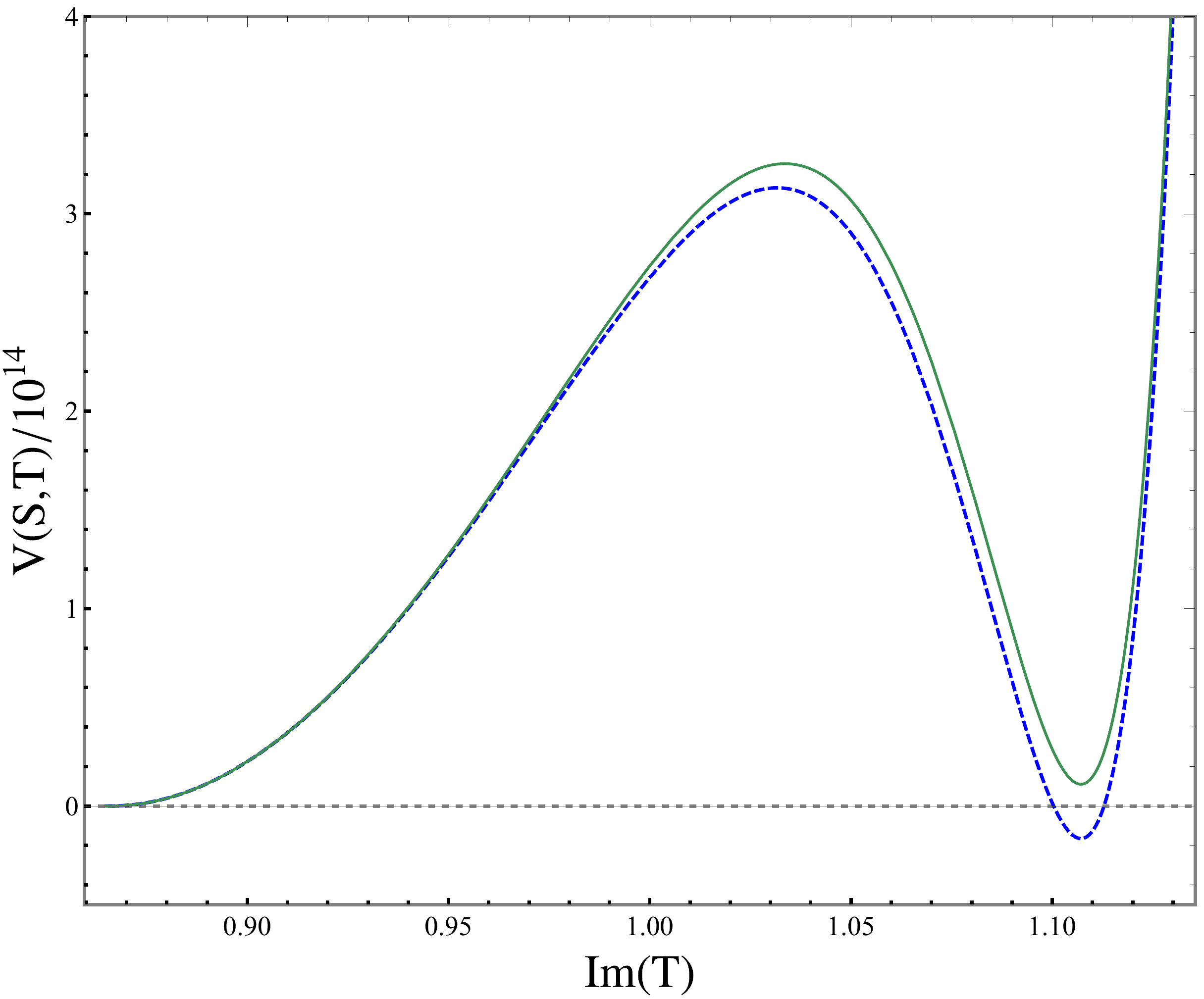}
 \caption{Potential for the parameters $\{\beta=10^{-3},\, m=2,\, n=2\}$. The AdS global minimum is located at $T=-0.5+1.107i$ while the fixed point $T=\rho$ is a Minkowski local minimum as expected from \cref{sec:typeIIvacua}. When we turn the dilaton dependence on and $A(S,\Sb)$ satisfies \cref{eq:boundouterrim}, the minimum in the outer rim becomes a dS local minimum, while $T=\rho$ stays a Minkowski minimum.}
  \label{fig:outerrimpot}
\end{figure}

\section{Non-Perturbative Stringy Effects \& de Sitter Vacua}
\label{sec:npeffects}

In the previous section, we described conditions on the function $A(S,\sbar)$ such that the two-modulus potential could permit dS vacua. Traditional heterotic model building techniques do not furnish appropriate mechanisms -- the use of $H_3$-flux or racetracks typically fix the dilaton vev in Class A vacua such that $\langle F_S\rangle=0$ and so $A(S,\sbar)$ identically vanishes. Therefore, to either extend our no-go theorem or find dS vacua, we are naturally led to effects beyond gaugino condensation.

\subsection{Class B de Sitter No-Go Theorem}
In the above, we have seen that Class B extrema, with the assumption of a stabilized dilaton, can in principle have positive energy while stabilizing the K\"{a}hler modulus. However, this class of solutions is not a panacea  -- a broad subclass of Class B extrema are unstable in the dilaton subsector. A version of this statement can be found in~\cite{deCarlos:1992kox}, which we encapsulate and extend in the following no-go theorem:

\vspace{0.2cm}

\begin{theorem}
 At a point $(T_0,S_0)$, the scalar potential in~\cref{eq:pot1} with\\            \indent\hspace{1.7cm} $k(S,\bar{S}) = -\ln(S+\bar{S})$ can not simultaneously satisfy:
\begin{enumerate}[label=(\roman*).,leftmargin=4.8\parindent]
    \item $V(T_0,S_0) > 0$
    \item $\partial_SV(T_0,S_0)= 0 \quad \&  \quad \partial_TV(T_0,S_0) = 0 $
    \item $\widetilde{F}_T(T_0)=0$
    \item Eigenvalues of the Hessian of $V(T,S)$ at $(T_0,S_0)$ are all $\ge 0$.\vspace{0.2cm}
\end{enumerate}
\end{theorem}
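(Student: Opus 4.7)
The plan is to proceed by contradiction, in close parallel with Theorem~1 but with the roles of the $S$ and $T$ sectors exchanged. Assume (i)--(iv) hold at $(T_0,S_0)$. Condition (iii) together with~\cref{eq:potparts2} forces both $\widehat{V}(T_0,\bar{T}_0)=0$ and $\partial_T\widehat{V}(T_0,\bar{T}_0)=0$, so the potential collapses to $V = e^{k_0}Z_0|\Omega_0|^2|H_0|^2(A_0-3)$ at the critical point, and (i) then demands $A(S_0,\bar{S}_0)>3$. I would next check that the full Hessian is block diagonal in the $S$ and $T$ sectors: setting $V_1\equiv e^k Z|\Omega|^2|H|^2$ and $B\equiv \widetilde{F}_S/\Omega = \Omega_S/\Omega + k_S$, the identity $\partial_S V_1 = V_1 B$ lets one rewrite $\partial_S V=0$ as $\partial_S A = -B(A-3)$; a short calculation then gives $\partial_S\partial_T V \propto B(A-3)+\partial_S A$ at the critical point, which vanishes (all $\widehat{V}$ pieces drop out thanks to $\widehat{V}=\partial_T\widehat{V}=\partial_S\widehat{V}=0$ there). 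Conjugation handles $\partial_S\partial_{\bar{T}}V$, so it suffices to find a negative eigenvalue in the $S$-block.

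The trace of the $S$-block in real coordinates $(\mathrm{Re}\,S,\mathrm{Im}\,S)$ is $4\,\partial_S\partial_{\bar{S}}V|_{(T_0,S_0)}$, so it is enough to show $\partial_S\partial_{\bar{S}}V<0$. Differentiating $V = V_1(A-3) + V_2\widehat{V}$ twice, inserting $\partial_SA=-B(A-3)$ and its conjugate, and using $k_{S\bar{S}} = 1/(S+\bar{S})^2$ together with $|B|^2 = A/(S+\bar{S})^2$, one obtains
\be
\partial_S\partial_{\bar{S}}V\big|_{(T_0,S_0)} \;=\; V_1\,\partial_S\partial_{\bar{S}}A \;-\; \frac{V_1(A-1)(A-3)}{(S+\bar{S})^2}\,.
\ee
The quantity $\partial_S\partial_{\bar{S}}A$ apparently depends on $\omega_S\equiv\Omega_{SS}/\Omega - (\Omega_S/\Omega)^2$, but this combination is itself fixed at the critical point by $\partial_SV=0$. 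Substituting it in and exploiting the $\SU(1,1)/\U(1)$ identities for $k=-\ln(S+\bar{S})$, I anticipate the $B$- and $\Omega$-dependence to cancel entirely, leaving the clean relation $(S+\bar{S})^2\partial_S\partial_{\bar{S}}A = (A-1)(A-5)$. Inserting this into the display above gives $\partial_S\partial_{\bar{S}}V = -2V_1(A-1)/(S+\bar{S})^2 < 0$ whenever $A>3$ (with $V_1>0$), contradicting (iv).

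The principal obstacle is verifying the identity $(S+\bar{S})^2\partial_S\partial_{\bar{S}}A = (A-1)(A-5)$: this requires bookkeeping the many cross terms generated by differentiating $A = k^{S\bar{S}}|B|^2$ twice, substituting $\omega_S$ from the first-order condition, and watching a sequence of cancellations take place. These cancellations exploit the constant-negative-curvature structure of the tree-level dilaton K\"{a}hler potential $-\ln(S+\bar{S})$ in an essential way, and are generically spoiled by Shenker-like corrections $\delta k(S,\bar{S})$ --- which is precisely why the rest of the paper can hope to engineer metastable de Sitter vacua by deforming $k$ along the lines of~\cref{sec:npeffects}.
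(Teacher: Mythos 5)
Your proof is correct and takes essentially the same route as the paper's: use $\widetilde{F}_T(T_0)=0$ to block-diagonalize the Hessian, then show $\partial_S\partial_{\bar S}V<0$ at the critical point (the paper phrases the final step via the determinant of the $S$-block rather than its trace, but the content is identical). The identity you flagged as the principal obstacle, $(S+\bar{S})^2\partial_S\partial_{\bar S}A=(A-1)(A-5)$ after eliminating $\Omega_{SS}$ via $\partial_SV=0$, does hold: your resulting $\partial_S\partial_{\bar S}V=-2V_1(A-1)/(S+\bar{S})^2$ coincides exactly with the paper's explicitly computed $-2Z_0\lvert H_0\rvert^2\left(2\lvert\Omega_0\rvert^2+\Lambda^4\right)/(S_0+\bar{S}_0)^3$ upon identifying $\Lambda^4=\lvert\Omega_0\rvert^2(A_0-3)$.
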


\begin{proof}
The proof proceeds in a similar fashion as the proof of Theorem 1. We assume (i)-(iv) are true and find a contradiction. To reconcile (i) and (iii), we must be in a Class B extrema such that~\cref{eq:classBcond} and its complex conjugate are satisfied. For (i) to be true, we also require that $H(T_0)\equiv H_0\neq 0$ and we can once again introduce a parameter $\Lambda^4>0$ such that
\be
    V(T_0,S_0) = e^{k_0}Z_0 \lvert H_0\rvert^2\Lambda^4\fstop
\ee
Once again the subscript 0 denotes evaluation at $(T,S)=(T_0,S_0)$. This can be solved with
\be
    (\Omega_{S})_0 = - (k_{S})_0 \Omega_0 \pm \sqrt{(k_{S\bar{S}})_0}\bigg\{
                        \Lambda^2  \pm i \bigg( \frac{3\lvert\Omega_0\rvert^2\lvert H_0\rvert^2 - \lvert\Omega_0\rvert^2\widehat{V}_0}{\lvert H_0\rvert^2}\bigg)^{\frac{1}{2}}\bigg\}\coma  
\ee
where any of the four sign combinations are valid. 
Note that $k(S,\bar{S})$ is given its tree-level expression in the assumptions of the theorem but we keep it general for now. Similarly, $\widehat{V}_0$ vanishes by (iv) and we will set it to zero below. For the second condition in (ii), note that
\be\begin{split}
    \partial_T V(T,S) =
    e^kZ\lvert \Omega\rvert^2\bigg\{
        \bar{H}(A-3)\widetilde{F}_T + \partial_T\widehat{V} - \frac{3i}{2\pi}\widehat{G}_2\widehat{V}\bigg\}\fstop
    \end{split}
\ee
Every term in the above derivative is proportional to $\widetilde{F}_T$ or its complex conjugate and so (iii) ensures the entire expression vanishes. By a similar logic, all mixed $S$ and $T$ derivatives vanish:
\be
    \partial_S^k\partial_{\bar{S}}^l \partial_T^{} V(T_0,S_0) = 0 \quad\quad \forall\,\, k,\,l\in\mathbb{N}_+\fstop
\ee
Hence the Hessian is block diagonal. We now examine the eigenvalues of the dilaton subsector. The components of the Hessian in terms of the fields $S = s+ib$ are
\begin{align}
    \partial_s^2 V &= 2\partial_S\partial_{\bar{S}}V + 2 \text{Re}(\partial_S^2V)\coma\\
    \partial_b^2 V &= 2\partial_S\partial_{\bar{S}}V - 2 \text{Re}(\partial_S^2V)\coma\\
    \partial_s\partial_bV &= -2 \text{Im}(\partial_S^2 V)\fstop
\end{align}
First, we see that the expressions for $\Omega_S(S_0)$ and $\Omega_{SS}(S_0)$ imply that
\begin{align}
    \label{eq:genSSbar}
    \partial_S\partial_{\bar{S}}V(T_0,S_0) &= \frac{e^{k_0}Z_0\lvert\Omega_0\rvert^2}{(k_{S\bar{S}})^2_0} \bigg[\left(2\lvert H_0\vert^2\lvert \Omega_0\vert^2+\Lambda^4 \widehat{V}_0\right)(k_{S\bar{S}})^3_0\\&\,\qquad + \left((3-\widehat{V}_0)\lvert\Omega_0\rvert^2+\Lambda^4\lvert H_0\rvert^2\right)\left(k_{S\bar{S}\bar{S}}k_{SS\bar{S}} - k_{S\bar{S}}k_{SS\bar{S}\bar{S}}\right)_0\bigg]\nonumber\\
    &\rightarrow -\frac{2Z_0\lvert H_0\rvert^2(2\lvert \Omega_0\rvert^2+\Lambda^4)}{(S_0+\bar{S}_0)^3}\fstop
\end{align}
In going to the second line we set $k(S,\bar{S}) = -\ln(S+\bar{S})$ and $\widehat{V}(T_0,\bar{T}_0)=0$. 
Then by identical logic to Theorem 1, we see that at least one eigenvalue of the Hessian is negative. Logically speaking, we must supplement the assumptions of the theorem with $\text{Re}(S_0)>0$, but this is a physical requirement for a sensible coupling constant. Thus (iv) is incompatible with (i)-(iii) and the theorem is demonstrated.
\end{proof}

\noindent In~\cref{sec:anomaly} we prove a similar statement to the above for orbifolds with non-trivial mixing in the K\"{a}hler potential displayed in~\cref{eq:modinvKahpot}. For the moment, we find an immediate corollary:

\vspace{0.2cm}

\begin{corollary}
Extrema of the two-modulus model with $k(S,\bar{S})=-\ln(S+\bar{S})$\\\indent \hspace{2.15cm} and~\cref{eq:superpot1} can never be dS vacua at the fixed points of $\pslz$.\vspace{0.2cm}
\end{corollary}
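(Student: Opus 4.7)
The plan is to reduce the corollary to the two no-go theorems already in hand, after observing that the modular structure at the fixed points automatically supplies several of their hypotheses. The key input, which I would spell out first, is that $\widetilde{F}_T$ is a non-holomorphic modular form of weight $(2,0)$. This can be read directly off the factorization $\widehat{V}(T,\bar{T}) = -\tfrac{(T-\bar{T})^2}{3}\,\lvert\widetilde{F}_T\rvert^2$ in~\cref{eq:potparts2}: since $\widehat{V}$ is modular invariant and $T-\bar{T}$ carries weight $(-1,-1)$, the combination $\widetilde{F}_T$ must carry weight $(2,0)$. Invoking the same fact already used in~\cref{sec:heterotic} to show $\partial_T V$ vanishes at $T=i,\rho$, weight $(2,0)$ forms are forced to vanish at the $\pslz$ fixed points, so both $\widetilde{F}_T(T_0)=0$ and $\partial_T V(T_0,S_0)=0$ come for free whenever $T_0\in\{i,\rho\}$.

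With this in hand, the plan is to take an arbitrary candidate dS extremum $(T_0,S_0)$ with $T_0\in\{i,\rho\}$ and $\partial_S V(T_0,S_0)=0$, and to split exhaustively along the Class~A versus Class~B dichotomy introduced earlier. If $\widetilde{F}_S(S_0)=\Omega_S+k_S\Omega=0$, the extremum is Class~A and every hypothesis of Theorem~1 is satisfied (condition~(ii) being automatic by the modular argument above), forcing at least one negative eigenvalue of the Hessian and contradicting the dS assumption. If instead $\widetilde{F}_S(S_0)\neq 0$, the extremum is Class~B; combined with $\widetilde{F}_T(T_0)=0$ and the assumed tree-level form $k(S,\bar{S})=-\ln(S+\bar{S})$, every hypothesis of Theorem~2 is met and a negative eigenvalue again appears in the dilaton block of the Hessian. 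Since Class~A and Class~B exhaust the solutions of $\partial_S V=0$, no dS vacuum can sit at a fixed point.

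The one technical point that warrants care is the vanishing of $\widetilde{F}_T$ at $T=i,\rho$: in principle a non-trivial multiplier system inherited from $H(T)$ could conspire against the standard argument that a weight $(2,0)$ form is killed by the $\mathbb{Z}_2$ and $\mathbb{Z}_3$ stabilizers of the fixed points. I would handle this by writing $\widetilde{F}_T$ as $H_T - \tfrac{i\pi}{2}\widehat{E}_2\,H$ (up to the normalization relating $\widehat{G}_2$ and $\widehat{E}_2$) using $\eta_T/\eta\propto E_2$, and checking that the multiplier picked up from $H/\eta^6$ is compatible with the vanishing conditions imposed at $i$ and $\rho$. Once this modular bookkeeping is in place, the corollary reduces to a one-line application of Theorems~1 and~2, and this verification of the modular weight of $\widetilde{F}_T$ is the only step that is not purely formal.
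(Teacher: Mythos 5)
Your overall skeleton---split into Class A versus Class B at a fixed point and invoke Theorems 1 and 2 respectively---is the same route the paper takes, but the step you flag as ``the only step that is not purely formal'' is precisely where the argument breaks, and the check does not come out the way you expect. From $\widehat{V}=-\tfrac{(T-\bar T)^2}{3}\lvert\widetilde{F}_T\rvert^2$ you can only conclude that $\widetilde{F}_T$ transforms with weight $(2,0)$ \emph{up to the multiplier system of $H$}, and the standard vanishing argument at $T=i,\rho$ needs that multiplier to be trivial on the stabilizer subgroup. It is not in general: $H$ carries the multiplier of $\eta^{-8n-12m}$, and under $\mathcal{S}$ at $T=i$ one finds $\widetilde{F}_T(i)=(-1)^{m+1}\widetilde{F}_T(i)$, which imposes nothing when $m$ is odd. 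What genuinely transforms as a weight-$(2,0)$ form with trivial multiplier is $\bar H\widetilde{F}_T\propto\partial_T\lvert H\rvert^2$, so the fixed-point condition is only that $H$ \emph{or} $\widetilde{F}_T$ vanishes there. Concretely, for $(m,n)=(1,n)$ at $T=i$ and $(m,n)=(m,1)$ at $T=\rho$ one has $H(T_0)=0$ but $\widetilde{F}_T(T_0)\neq0$ (see the tables in \cref{app:modularforms}), so hypothesis (iii) of Theorem 2 fails and your Class B branch has a hole exactly at those cases.

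The paper closes this hole by treating the exceptional cases separately: the computations in \cref{sec:vacua} show that for $m=1$ at $T=i$ and $n=1$ at $T=\rho$ the potential at the fixed point is positive and independent of $A(S,\bar S)$, while $\partial_S V\propto\bar\Omega(\Omega_S+k_S\Omega)$, so (discarding the Minkowski branch $\Omega=0$) stabilizing the dilaton forces the Class A condition and hence $A(S,\bar S)=0$, which violates the positivity bound on the $T$-block of the Hessian (e.g.\ \cref{eq:AforTim1}); these points are therefore always unstable dS. You need to add this case analysis---or an equivalent appeal to Theorem 1 after observing that dilaton stabilization forces $\widetilde F_S=0$ in the exceptional cases---before the corollary follows. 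Everything else in your proposal matches the paper's argument.
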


\noindent This follows from Theorems 1 and 2 and the fact that $\widetilde{F}_T$ vanishes at the $\pslz$ fixed points for all cases except for $(m,n) = (1,n)$ at $T=i$ and  $(m,n) = (m,1)$ at $T=\rho$ -- see~\cref{app:modularforms}. However, as discussed in~\cref{sec:vacua}, these cases can only ever yield unstable dS and the corollary is verified. 

Thus we have an analytic argument for further results of~\cite{Gonzalo:2018guu}. They found by numerical analysis that Class B vacua at the fixed points were never dS vacua in cases where $\Omega(S)$ was a sum of exponentials, which is reminiscent of a racetrack superpotential. We see that this result and more is captured by Corollary 2.1 -- for a tree-level dilaton K\"{a}hler potential, no racetrack or other non-perturbative effect captured by $\Omega(S)$ in the superpotential can result in dS vacua at $T=i,\,\rho$. The same applies for $H_3$-flux appearing as an additive constant in $\Omega(S)$. This result is a natural extension of~\cite{Quigley:2015jia} to orbifold models with threshold corrections and worldsheet instantons. 

Theorem 2 illustrates that not any Class B extremum can be a dS vacua -- in particular, constructing arbitrarily complicated superpotentials via $\Omega(S)$ is insufficient. However,~\cref{eq:genSSbar} indicates how to go beyond Theorem 2 at the fixed points -- one must go beyond the tree-level K\"{a}hler potential for the dilaton.\footnote{One could also analyze~\cref{eq:gensuperpot}, which is not directly covered by Theorems 1 \& 2.} This can be achieved by the Shenker-like terms briefly described in the introduction. We now turn to reviewing and modeling these effects.

\subsection{Stringy Effects in Heterotic Theories}
As emphasized above, gaugino condensation is not an inherently stringy phenomenon -- rather, it is a purely quantum field-theoretical effect. This is evident from the form of the non-perturbative superpotential in~\cref{eq:superpotp5}, which contributes terms to the Lagrangian of the form $\delta\mathcal{L}\sim e^{-1/g_s^2}$. Truly stringy non-perturbative effects would scale as $\delta\mathcal{L}\sim e^{-1/g_s}$, which can be found via D-branes in theories with open strings. Contributions from such stringy effects would be stronger than gaugino condensation at weak coupling. 

Considering such effects in the context of heterotic models at first seems surprising, given that these string theories lack D-branes. Nonetheless, heterotic models can and do have these stringy non-perturbative effects. 
The original argument for the existence of such effects by Shenker~\cite{Shenker:1990} is based on the scaling properties of matrix theory amplitudes and applies to all closed string theories.

Shenker made a generic existence argument, but one can go a bit further by leveraging string dualities~\cite{Silverstein:1996xp,Antoniadis:1997nz}.  Indeed, Silverstein~\cite{Silverstein:1996xp} extended these general arguments by outlining how such non-perturbative stringy effects could be seen from the dualities of heterotic theories with type I and type IIA string theories. Let us first consider the duality with type I. The $10d$ type I/heterotic map is
\be\renewcommand{\arraystretch}{1.3}
\begin{array}{rcl}
   G^H_{MN} &= &g_s^H G_{MN}^I  \\
   g_s^H  & =&(g_s^I)^{-1}\coma
\end{array}
\ee
where $G_{MN}^i$ is the $10d$ metric of heterotic or type I theory. Thus, type I worldsheet instantons wrapping a 2-cycle with area $A^I$ that give contributions $\delta\mathcal{L}\sim e^{-A_I}$ get mapped to $\delta\mathcal{L}\sim e^{-A_H/ g_s^H}$. This is precisely the type of Shenker-like effect described above. 

On the type IIA side, one can consider a scenario where the heterotic dilaton is mapped to a K\"{a}hler modulus on the type IIA side: $S^H \leftrightarrow T^{IIA}$. Such a duality is known to hold also in $4d$ in the large volume/weak coupling limit~\cite{Kachru:1995wm,Vafa:1995gm}. In this case, Silverstein considers ``worldline" instanton contributions $\delta\mc{L}\sim \sum_m e^{-m r}$ arising from strings wrapping a non-contractible loop of radius $r$ in the compactification manifold. This implies that the compactification manifold must have a non-trivial fundamental group. Applying the duality, we get in the heterotic side an effect which again falls like $\delta\mc{L}\sim \sum_m e^{-m/g_s^H}$.

The above was made much more tangible via the explicit calculations in $10d$ and $9d$ heterotic string theories by Green and Rudra in~\cite{Green:2016tfs}. The authors calculated one-loop diagrams in the Ho\v{r}ava-Witten background of $11d$ supergravity and find non-trivial $R^4$ terms which can be cast as contributions in the heterotic action proportional to $e^{-1/g_s}$. For example, a one-loop $4$ graviton bulk amplitude contributes a term to the $10d$ Spin$(32)/\mathbb{Z}_2$ heterotic effective action of the form
\begin{align}
    S^{HO} \supset \frac{g_{HO}^{-1/2}}{2^8(2\pi)^74!\,l_H^2}\int_{\mathcal{M}_{10}}\sqrt{-G}\,t_8t_8 R^4\mathbf{G}_{3/2}(ig^{-1}_{HO})\coma
\label{eq:HOShenker}
\end{align}
where $t_8$ is a rank-eight tensor and $l_H$ is the heterotic string length. The dependence on the string coupling $g_{HO}$ is encoded in the real-analytic Eisenstein series $\mathbf{G}_s(\tau)$. These are non-holomorphic modular forms with weight $(0,0)$ (i.e. non-holomorphic modular functions) -- see~\cref{app:modularforms}. For $s=3/2$, the series has a small-$g_s$ expansion
 \begin{align}
     \mathbf{G}_{3/2}(ig_s^{-1}) \!&= \zeta(3) g_{s}^{-3/2} +2\zeta(2)g_{s}^{1/2}+\sum_{n\in\mathbb{Z}^+}4\pi\sigma_{-1}(\lvert n\rvert)e^{-\frac{2\pi\lvert n\rvert}{g_{s}}}(1+\mathcal{O}(g_{s}))\\
                        \!&= \zeta(3) g_{s}^{-3/2} +2\zeta(2)g_{s}^{1/2}+e^{-\frac{2\pi}{g_s}}\left(4\pi + \frac{3}{4} g_s +\mathcal{O}(g_s^2)\right)+\mathcal{O}(e^{-\frac{4\pi}{g_s}}) \, ,
\label{eq:realeis}
 \end{align}
$\sigma_s(n)$ being the divisor sum, as defined in~\cref{app:modularforms}. Thus we see that there is an infinite set of non-perturbative instanton contributions of the form predicted by Shenker~\cite{Shenker:1990}. This contribution to the action is not unlike the $R^4$ term in type IIB string theory, where the $g_s^{-1}$ dependence is augmented to a dependence on the axio-dilaton. Indeed one can think of the result in~\cref{eq:HOShenker} as being related to the $9d$ $R^4$ term in IIB via orientifold projection to type I, S-dualizing to HO, and then taking the $10d$ limit. The authors find a similar structure in the gauge sector via a non-perturbative contribution to the $F^4$ correction to the effective action. Interestingly,~\cite{Green:2016tfs} find these instanton effects for both heterotic string theories in $9d$, but the calculated Shenker-like effects vanish in the $10d$ limit of the $E_8\times E_8$ heterotic string. This makes a tantalizing connection with the discussion of stable open heterotic strings by Polchinski~\cite{Polchinski:2005bg}.

We close this discussion with a note on the importance of duality in the above results for the heterotic Shenker-like terms in $9d$ and $10d$. As argued in~\cite{Green:2016tfs}, S-duality between type I and HO cannot operate term-by-term in~\cref{eq:HOShenker} -- if we write the $R^4$ term of the $9d$ HO and type I theories as $r_i\ell_i^{-1}g_i^{-\frac{1}{2}}U(ig_i^{-1})t_8t_8R^4$ with $r_i$ the radius of the additional $S^1$, $\ell_i$ the string length, $g_i$ the string coupling, and $i= I, HO$, then S-duality demands
\begin{equation}
    U(ig_{HO}^{-1}) =  U(ig_{I}^{-1})
\end{equation} 
since $r_{HO}\ell_{HO}^{-1}g_{HO}^{-\frac{1}{2}}=r_I\ell_I^{-1}g_I^{-\frac{1}{2}}$. This property is not satisfied by the perturbative part of the $R^4$ coefficient, indicating that one must include additional terms. On the other hand, it is satisfied by the real-analytic Eisenstein series -- it follows from their invariance under S-transformations of $\pslz$. Thus one could argue that heterotic Shenker-like effects play an important role in the duality web connecting string theories.
 
 This exhausts the discussion of heterotic stringy non-perturbative effects in the literature, as far as the current authors are aware. Clearly, these effects are poorly understood and deserve thorough investigation in their own right. However, in the following section we will attempt to understand the impact of these effects on heterotic vacua in the effective $4d$ theories considered above and leave an explicit full-fledged derivation for future work.  
Along a similar vein, these Shenker-like effects have been previously applied to the program of heterotic particle phenomenology in the form of ``K\"ahler Stabilized Models"~\cite{Casas:1996zi,Binetruy:1996nx,Binetruy:1997vr,Barreiro:1997rp,Kaufman:2013pya} -- see~\cite{Gaillard:2007jr} for a review.

\subsection{Stringy de Sitter Vacua \& the Linear Multiplet Formalism}
As argued above, heterotic string theories contain non-perturbative corrections to their effective actions with strength $\mathcal{O}(e^{-1/g_s})$. In the $4d$ models we have introduced, these corrections will appear in the K\"{a}hler potential of the low energy effective theory. To study the impact of these non-perturbative effects on the two-modulus model, we must settle for a parametrization in the absence of explicit calculations. For concrete calculations, we will assume that the Shenker-like terms are functions of solely the dilaton and therefore they have the characteristic strength of $\mathcal{O}(e^{-1/g_4})$. We will return to the question of K\"{a}hler modulus dependence at the end of this section. Furthermore, we follow the conventions of~\cite{Gaillard:2007jr} and use the linear multiplet formalism for the dilaton~\cite{Ferrara:1974ac,Siegel:1979ai}.  This has been advocated as the superior convention for the dilaton~\cite{Derendinger:1994gx,Gates:1988kp,Siegel:1988qu} since it naturally describes corrections to the gauge coupling, Green-Schwarz anomaly cancellation conditions~\cite{Butter:2014eva,Gaillard:2017ypo,Gaillard:2019ofq}, effective descriptions of gaugino condensation, and higher-genus corrections.\footnote{However, care should be taken with the linear multiplet after the dilaton gains a mass~\cite{Burgess:1995aa}. We thank Fernando Quevedo for raising this point.}
The component fields of the linear multiplet superfield $L$ are a real scalar $\ell$ (the dilaton), a Majorana fermion $\psi$ (the dilatino), and an antisymmetric tensor (the 2-form $B_2$). This decomposition also motivates the linear multiplet formalism as the natural choice since its degrees of freedom are precisely those of the string compactification. 

In the $U(1)_K$ superspace formalism~\cite{Binetruy:2000zx}, the kinetic contribution to the effective supergravity action is determined by the superspace integral
\begin{align}
    \mathcal{L}_{KE} &= \int d^4\theta E(-2+f(L))\coma
\end{align}
where $E$ is the supervielbein determinant. The function $f(L)$ parametrizes the Shenker-like terms described above. It is related to the dilaton ``K\"{a}hler potential"\footnote{This is not a true K\"{a}hler potential since the dilaton is no longer described by a chiral superfield and is more accurately described as a ``kinetic potential".}
\begin{align}
        k(L) &= \ln(L) +g(L)\coma
\end{align}
via the differential equation
\begin{align}
     L\,\frac{df}{dL}&=- L\,\frac{dg}{dL} +f\fstop
\label{eq:dildiffeq}
\end{align}
In the presence of the non-perturbative terms, the relation between $g_4$ and the dilaton vacuum expectation value is altered:
\be
            \frac{g_4^2}{2} = \bigg\langle\frac{\ell}{1+f(\ell)}\bigg\rangle\fstop 
\ee

In the linear multiplet formalism, gaugino condensation does not induce a superpotential for the dilaton. Instead, the scalar potential of the K\"{a}hler modulus and dilaton model can be calculated from the corrections to the gauge kinetic function, as written in~\cref{eq:gaugekin}. See~\cite{Gaillard:2007jr} for details. If we consider a single condensing gauge group with beta function coefficient $b_a$ (corresponding to $\Omega(S) = e^{-S/b_a}$ in the chiral formalism of~\cref{sec:heterotic}) the scalar potential of~\cref{eq:pot1} in the linear multiplet formalism becomes
\be
    V(\ell,T,\bar{T}) =  Z(T,\bar{T})\bigg[\! \bigg(\frac{(1+b_a\ell)^2(1+\ell g'(\ell))}{b_a^2\ell^2}-3\!\bigg)\lvert H(T)\rvert^2+\widehat{V}(T,\bar{T})\bigg]\ell e^{g(\ell)-(f(\ell)+1)/b_a\ell}\fstop
\label{eq:lindilpot}    
\ee
For a single gaugino condensate, one can bypass superspace calculations and derive this potential simply by taking~\cref{eq:pot1} and implementing the tree-level relation between chiral and linear multiplet formalisms:
\be
        \frac{\ell}{1+f(\ell)} = \frac{1}{S+\bar{S}}\fstop
\label{eq:chilinmap}
\ee
To proceed, we must specify the functions $f(\ell)$ and $g(\ell)$ that parameterize the non-perturbative stringy terms. Due to the differential equation in~\eqref{eq:dildiffeq}, we need only specify one of the functions and a boundary condition. As in~\cite{Gaillard:2007jr}, we will specify $f(\ell)$ and fix $g(\ell)$ by the requirement that $g(0)=0$. This last condition is simply demanding that the non-perturbative effects vanish as the string coupling vanishes and only gets rid of a constant in the solution for $g(\ell)$. Note that one could also specify $g(\ell)$ and then derive $f(\ell)$. 
We take $f(\ell)$ to have the form~\cite{Gaillard:2007jr}
\be\label{eq:npcorrlinear}
    f(\ell) = \sum_{n=0}A_n \ell^{q_n}e^{-B/\sqrt{\ell}}\coma
\ee
with constants $A_n$, $q_n$, and $B$. We will always consider $B>0$ with the expectation that non-perturbative effects should vanish at exponentially at large field values~\cite{Witten:1996bn}. The polynomial in $\ell$ is mirroring the structure of the real-analytic Eisenstein series expansion in~\cref{eq:realeis}. The above is essentially a parametrization for the first term in the infinite series of~\cref{eq:realeis}. In principle, we should include these higher order instantons. In practice, if our vacuum yields a small value for the first instanton, we will neglect higher order terms. We will return to this point below. 

With this parametrization of $f(\ell)$, $g(l)$ is
\begin{align}
    g(\ell) &= \sum_{n=0}A_n  B^{2q_n}\bigg\{ 2(1-q_n)\Gamma(-2q_n,B/\sqrt{\ell})-\Gamma(1-2q_n,B/\sqrt{\ell})           \bigg\}\coma
\label{eq:npcorrlinearg}
\end{align}
where $\Gamma(a,x)$ is the upper incomplete gamma-function
\be
    \Gamma(s,x) = \int_{x}^\infty y^{s-1}e^{-y}dy \fstop
\ee
We now consider specific examples of~\cref{eq:lindilpot} using~\cref{eq:npcorrlinear} and~\cref{eq:npcorrlinearg}. To simplify the discussion, we will consider vacua at the fixed point $T=\rho$. So long as $n=0$ in the parametrization of $H(T)$,~\cref{eq:Hpara}, any dS minimum in the dilaton sector is immediately a minimum of the $T$ sector. Furthermore, $\widehat{V}(T,\bar{T})$ in~\cref{eq:lindilpot} vanishes. We can also set $m=0$ without loss of generality since non-zero $m$ simply modulates the value of the vacuum energy via an overall factor in~\cref{eq:lindilpot}.

\paragraph{Example 1: Trivial polynomial}
First, we consider the case of a trivial polynomial setting $q_n=0$ $\forall$ $n$:
\begin{align}
    f_1(\ell) &= A_0e^{-B/\sqrt{\ell}}\coma\\ 
    g_1(\ell) &= A_0\bigg\{ 2\Gamma(0,B/\sqrt{\ell}) - e^{-B/\sqrt{\ell}}        \bigg\}\fstop
\end{align}
If we take $B= \pi$ and $A_0=26$, we find the potential on the left-hand side of~\cref{fig:dSexamples}. At the dS vacuum, we find
\begin{align}
    g_4 &\simeq 0.99\coma\nonumber\\
    \langle f(\ell)\rangle  &\simeq 1.92\coma\\
    \langle e^{-B /\sqrt{\ell}}\rangle &\simeq 7.4\times 10^{-2}\nonumber\fstop
\end{align}
\begin{figure}
\centering
\includegraphics[width=\textwidth,keepaspectratio]{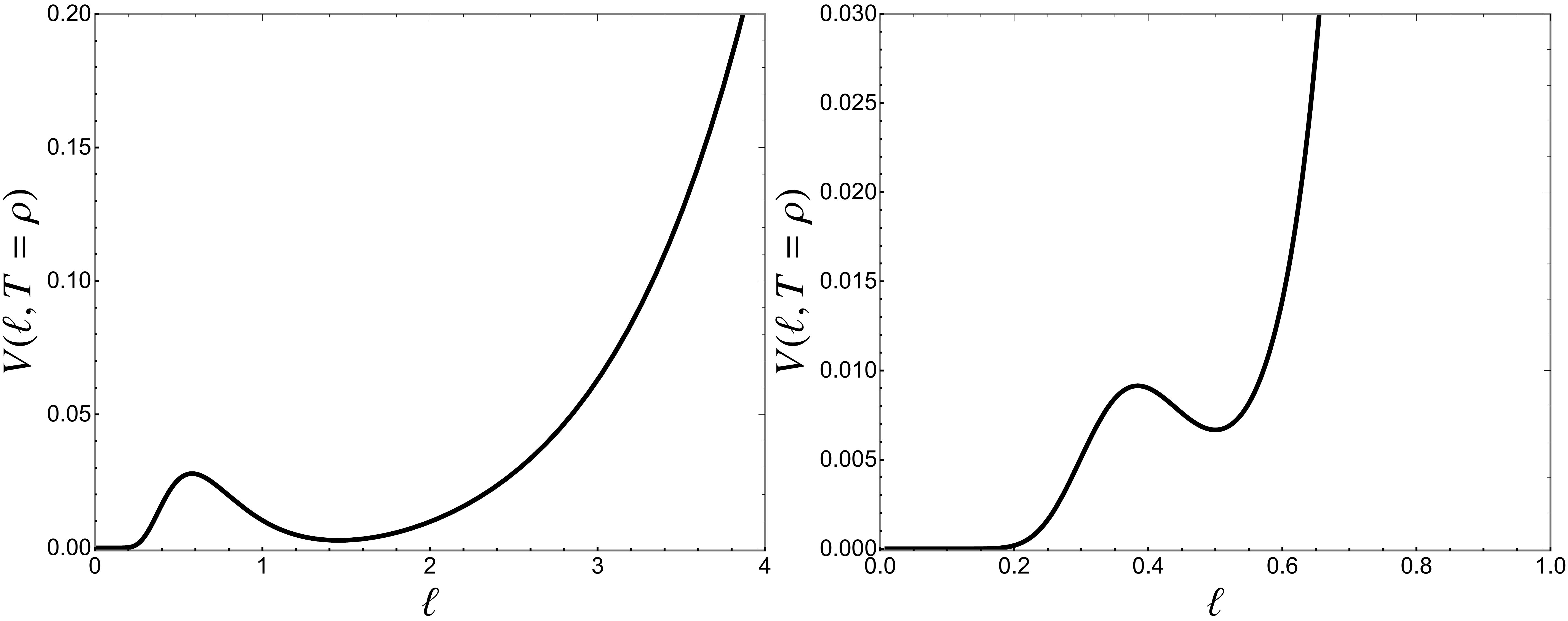}
\caption{Plots of the scalar potential evaluated at $T=\rho$ for Example 1 (left) and Example 2 (right).}
   \label{fig:dSexamples}
\end{figure}
\paragraph{Examples 2 \& 3: Non-trivial polynomials}
Let us also consider two examples with non-trivial polynomials. We will keep the first two terms in \cref{eq:npcorrlinear}:
\be
f_{2,3}(\ell) = (A_0+A_1\ell^{q})e^{-B/\sqrt{\ell}}\fstop
\ee
If we take $q=-0.5$, $B=2\pi$, $A_0 = -7372$, and $A_1=7254$, we find the potential on the right in~\cref{fig:dSexamples}
At this minimum, the parameters take the values 
\begin{align}
    g_4 &\simeq 0.84\coma\nonumber\\
    \langle f(\ell)\rangle  &\simeq 0.39\coma\\
    \langle e^{-B/\sqrt{\ell}}\rangle &\simeq 1.3\times 10^{-4}\nonumber\fstop
\end{align}
Instead, if we take $q=1/2$, $B=0.6\pi$, $A_0 = 10$, $A_1 = 9$
\begin{align}
    g_4 &\simeq 0.70\coma\nonumber\\
    \langle f(\ell)\rangle  &\simeq 2.1\coma\\
    \langle e^{-B/\sqrt{\ell}}\rangle &\simeq 0.11 \nonumber\fstop
\end{align}
This final example is motivated by the Eisenstein expression from~\cite{Green:2016tfs}, where the polynomial coefficients are $\mathcal{O}(1-10)$. In both examples 2 and 3, the string coupling is stabilized close to the oft quoted phenomenological target of $g_4 = 2^{-1/2}\simeq 0.707$.

Let us make several comments on these vacua. First, note that the relationship between the linear and chiral multiplet formalisms in~\cref{eq:chilinmap} implies that the typical dilaton runaway $\text{Re}(S)\rightarrow 0$ to weak coupling maps to $\ell\rightarrow 0$. Examination of the potentials in~\cref{fig:dSexamples} shows that vacuum tunneling is possible via bubble nucleation to Minkowski space with $g_s\rightarrow 0$.  This generic behavior of the vacua is rooted in the vanishing of the Shenker-like effects at weak coupling. Thus the dS vacua generated by Shenker terms are \textit{metastable}. This is in accord with the general notion that if dS vacua exist in theories of quantum gravity, they must be at best metastable~\cite{Goheer:2002vf} and perhaps correspond to resonances~\cite{Maltz:2016iaw,Brahma:2020tak,Bernardo:2021rul} or mixed states in the AdS/CFT correspondence~\cite{Kaloper:1999sm,Hawking:2000da,Freivogel:2005qh,Maldacena:2010un}.

Second, an important concern regarding the above vacua is control. Perturbative heterotic $4d$ string vacua which generate weakly-coupled non-abelian gauge groups with values of the gauge couplings roughly of that of simple GUTs $\alpha_{\rm GUT}\simeq 1/25$ cannot exist at large volume ${\cal V}$ of the compactification manifold. This statement follows from the tree-level relation between the value of the tree-level heterotic $4d$ gauge coupling $\alpha_{4d}$, the $4d$ heterotic dilaton $\text{Re}(S)$, the string coupling $g_s$, and the volume ${\cal V}$ of the extra dimensions
\begin{equation}
\alpha_{4d}^{-1}=4\pi \text{Re}(S) =\frac{\cal V}{g_s^2}\fstop
\end{equation}
If we now require $\alpha_{4d}^{-1}\simeq \alpha_{\rm GUT}^{-1}\simeq 25$ and $g_s<1$ to keep the heterotic string perturbative, this implies small compactification volume ${\cal V}< 25$. Hence, the control mechanism of a large-volume expansion traditionally employed in the context of type IIB string vacua is not operational here.

Instead, four-dimensional heterotic strings, on orbifolds in particular, allow for the computation of perturbative and non-perturbative quantum corrections to the space-time effective action directly using worldsheet conformal field theory techniques. These corrections in turn are highly constrained by holomorphy and modular invariance, leading to a level of control (in some cases to infinite order of a particular series of quantum corrections) that can match and sometimes even surpass the level of large-volume expansion control in type IIB string vacua.

While the dilaton has been stabilized such that $g_s$ is in a perturbative regime, our parametrization of the Shenker-like effects in~\cref{eq:npcorrlinear} ostensibly includes only the lowest order term in an infinite sum of non-perturbative terms. One would imagine that the full set of Shenker-like corrections has the form
\be
    f(\ell) = \sum_{\alpha=0}^\infty P_\alpha(\ell)e^{-B_\alpha/\sqrt{\ell}}\coma
\label{eq:dinstsum}
\ee
for some polynomials $P_\alpha(\ell)$ and constants $B_\alpha$. Indeed this is what the M-theory calculations of~\cite{Green:2016tfs} and duality arguments in~\cite{Silverstein:1996xp} point towards. Viewed from this perspective, consistency of the vacua requires that the high-order terms must be subdominant to the lowest-order one we have included. There are two ways this could be achieved, depending on the nature of the sum in~\cref{eq:dinstsum}. If the sum behaves similar to an instanton sum from Gopakumar-Vafa invariants, then the coefficients of the terms in~\cref{eq:dinstsum} could grow rapidly as $\alpha$ increases. Control then demands that the exponentials decay rapidly with $\alpha$. Such a scenario would correspond to Example 2, where the ratio of the lowest-order exponential to the nominal value of the next exponential is $\langle e^{-2B/\sqrt{\ell}}/e^{-B/\sqrt{\ell}}\rangle = \langle e^{-B/\sqrt{\ell}}\rangle =  \mathcal{O}(10^{-4})$. On the other hand, it may be that the coefficients in~\cref{eq:dinstsum} do not grow rapidly, or at all, with $\alpha$. This would occur if the sum in~\cref{eq:dinstsum} descends from a function such as the real-analytic Eisenstein series of~\cite{Green:2016tfs}. For such a scenario, control could be achieved with even modestly small exponentials, such as Example 3 above where the instanton ratio is only $\mathcal{O}(10^{-1})$. In either case, control hinges on the precise nature of the sum of Shenker-like terms. We are not in a position to argue generalities about this sum and leave a detailed discussion to future work.

\paragraph{} Finally, we have downplayed the role of the K\"{a}hler modulus in this section and simply fixed $T=\rho$ at the very start. This approach is valid solely because of the remarkable number-theoretic properties of the scalar potential. Modular symmetry provides a simple means to identify critical points and vacuum criteria via the block-diagonal structure of the Hessian at $\pslz$ fixed points. As shown in~\cref{sec:typeIIvacua}, $T$ is stabilized at $T=\rho$ if $A(S,\bar{S})>2$  -- a condition that can, in principle, be satisfied by the Shenker-like terms.

On the other hand, we have parametrized the Shenker-like terms solely as a function of the dilaton. However, it is conceivable that these stringy effects depend on the K\"{a}hler modulus. While this would complicate the above calculations, there is a conceptual issue as well -- an arbitrary dependence on the K\"{a}hler modulus would break the modular symmetry of $T$ and hence T-duality, which would raise concerns over the consistency of the string models. Therefore, we posit that the Shenker-like terms must uphold T-duality and consider three scenarios for the sum of non-perturbative effects: 
\begin{itemize}
    \item[] \textbf{Case 1:} the stringy effects do not depend on the K\"{a}hler modulus at all
    \item[] \textbf{Case 2:} the stringy effects depend  on $T$ only through a modular invariant function
    \item[] \textbf{Case 3:} a single instanton term breaks T-duality, but the sum of instantons is a function with well-defined modular properties.
\end{itemize} 
We argue that all the scenarios above may be realized in varied heterotic compactifications. For Case 1, the schematic calculations of~\cite{Silverstein:1996xp} indicate that the type IIA-heterotic map results in Shenker-like terms that do not depend on the K\"{a}hler modulus, although they may depend on other moduli that we have omitted in the considerations of this paper. 

The second case above occurs in $4d$, $\mathcal{N}=2$ type IIB-heterotic dualities~\cite{Candelas:1993dm,Kachru:1995wm,Kachru:1995fv}. One example is type IIB on the mirror dual of $\mathbb{P}^4_{11226}[12]$, which is dual to heterotic on $K3\times T^2$. One combination of the type IIB complex structure moduli, $x$, gets mapped to the heterotic  $T^2$ K\"{a}hler modulus via the j-invariant:
\be
        x = \frac{1728}{j(T)}+\cdots\fstop 
\ee
The dots are corrections that arise away from the strict heterotic weak coupling limit. Proposals for dualities involving similar type IIB-heterotic maps to the above but with hauptmoduls of congruence subgroups of $\pslz$ were studied in~\cite{Klawer:2021ltm,Alvarez-Garcia:2021mzv}.

The third possibility would be realized if the instanton sum was itself the expansion of some modular-invariant function, such as the real-analytic Eisenstein series in~\cref{eq:RAEis} and similar to the results of~\cite{Green:2016tfs}. A single term in the sum breaks modular invariance, but the entire sum is invariant. 

Let us reconsider our vacua in light of these scenarios for the sum of Shenker-like terms. First, since T-duality is preserved, the scalar potential remains a modular function, and so the fixed points of $\pslz$ remain extrema of the scalar potential.
Furthermore, $K_T$ retains its tree-level value at the fixed points since the $T$ derivative of the Shenker-like term sum must vanish at the fixed points. This implies that in almost all cases\footnote{Exceptions occur when $(m,n)=(1,n)$ or $(m,1)$.} $F_T$ still vanishes at the fixed points. Thus, the condition $A(S,\bar{S})>3$ for a dS vacuum remains unchanged, except that $A(S,\bar{S})$ is now $T$ dependent and should be evaluated at the fixed point. Statements on the Hessian are much more difficult without a specific form of the Shenker-like terms. Modular symmetry implies that mixed entries in the Hessian vanish and we retain a block-diagonal structure. Beyond this, conditions for a minimum are dependent on the precise form of the Shenker-like terms.

We also note that for some of the vacua inside the fundamental domain, the hierarchy between the dilaton an K\"ahler modulus mass scales can become so large that the $T$-dependence of the Shenker-like terms becomes irrelevant. This is because at the level of dilaton stabilization, $T$ becomes frozen and rigidly stabilized at a much higher mass scale. In such vacua the Shenker-like terms become effectively purely dilaton-dependent. We have exhibited in Fig.~\ref{fig:fund_dom_minima} one class of vacua with this behavior. The $(m,n)=(m,0)$ with $P(j(T))=1$ vacua depicted there show stabilization of the K\"ahler modulus and its axion in `needle-thin' vacua indicating the very high mass scale of $T$ far above the mass scale where $S$ (or $\ell$, respectively) gets stabilized.

We close this section with a comment on a puzzle posed in~\cite{Silverstein:1996xp}. As discussed above, obtaining the heterotic Shenker-like terms from duality with type IIA requires a non-trivial fundamental group on the IIA compactification manifold giving rise to worldline instantons. Similarly, one could examine type I-heterotic duality with a non-trivial fundamental group. Using the map defined above, one finds that worldline instantons on the type I side get mapped to heterotic terms of strength $\delta\mathcal{L}\sim \exp(-R/\sqrt{g_s})$, with $R$ the radius of the non-contractible loop. Assuming such contributions are not identically zero, they may be stronger at weak coupling than those predicted by Shenker. We have not included these Silverstein-like effects in our considerations and leave their explicit study and impact on heterotic vacua to future work.

\section{Discussion}
\label{sec:discussion}
In this work, we have explored the possibility of extending dS no-go results to include non-perturbative stringy effects in the context of heterotic compactifications, particularly toroidal orbifolds. Focusing on the class of models where the light moduli consist of the dilaton and the overall K\"{a}hler modulus, we partially achieved this goal for extrema where the dilaton F-term vanishes via Theorem 1. The assumptions of this theorem are that the stringy effects are solely functions of the dilaton and that the superpotential has a factorized form and respects T-duality. The no-go theorem then rules out dS vacua arising from non-perturbative effects in both $T$ and $S$ in the superpotential and in $S$ in the K\"{a}hler potential when $\langle F_S\rangle =0$. As corollaries we were able to confirm previous conjectures and numerical studies arguing against dS vacua in the pure K\"{a}hler modulus theory and in the two-modulus theory when the dilaton K\"{a}hler potential has its tree-level expression. Other partial no-go theorems in the literature extend this to include perturbative quantum corrections in the heterotic moduli K\"ahler potential. 

Theorem 2 utilizes a similar argument to Theorem 1 and forbid dS vacua at the fixed points of $\pslz$ when the dilaton has only a tree-level K\"{a}hler potential, even if $\langle F_S\rangle\neq0$. This rules out dS vacua arising purely from non-perturbative effects in the superpotential in $S$ and $T$ at the fixed points, again affirming a previous conjecture. Theorem 3 in~\cref{sec:anomaly} proves a kindred statement for the case of non-vanishing Green-Schwarz coefficient $\delta_{GS}$. 

The above results point to non-perturbative corrections to the K\"ahler potential that yield $\langle F_S\rangle\neq0$ as almost the only loophole to realize heterotic de Sitter vacua. After reviewing arguments in the literature in favor of universally present non-perturbative corrections to $\mathcal{K}$ at ${\cal O}(e^{-1/g_s})$, we demonstrated that the inclusion of these effects for heterotic orbifold vacua could lead to dilaton stabilization with F-term SUSY breaking in the dilaton sector using just a single gaugino condensate contribution in both AdS and metastable dS vacua where the K\"ahler modulus is stabilized as well, as emphasized in the models of~\cite{Casas:1996zi,Binetruy:1996nx,Binetruy:1997vr,Barreiro:1997rp,Kaufman:2013pya,Gaillard:2007jr}. Furthermore, we showed that this stabilization could also lead to heterotic dS vacua. These solutions, if they truly exist vis a vis the ${\cal O}(e^{-1/g_s})$ K\"ahler corrections in a bona fide heterotic string construction, thread the needle through several partial no-go results -- the no-go theorems we present here as well as those of~\cite{Maldacena:2000mw,Gautason:2012tb,Green:2011cn,Kutasov:2015eba,Quigley:2015jia,Gonzalo:2018guu}.

These considerations leave a pressing open problem -- determination of the precise form of the dilaton-dependent ${\cal O}(e^{-1/g_s})$ corrections to the K\"{a}hler potential, whose existence in the heterotic string so far is based on duality arguments and $11d$ computations. Aside from providing a potential mechanism to stabilize the dilaton and realize heterotic de Sitter vacua, these corrections are interesting in their own right as a window into truly stringy features of heterotic theories. A primary goal of future investigations will be to illuminate the nature of heterotic Shenker-like effects and to perform calculations to complement those of~\cite{Green:2016tfs}.

Along this analysis we reviewed how modular invariance of $4d$ heterotic orbifolds is maintained by an interplay of anomaly cancellation and 1-loop threshold corrections to the gauge kinetic function of the $4d$, ${\cal N}=1$. These corrections produce the leading $1/\eta^6(T)$ dependence of the gaugino condensate superpotential required by modular invariance. Further corrections in $T$ arise beyond leading order as effectively an infinite series of worldsheet instanton corrections constrained by modular invariance to resum into a product of integer powers of the holomorphic Eisenstein functions $G_4$ and $G_6$, and an arbitrary polynomial of the modular invariant $j$-function. We demonstrate in detail that for even simple non-trivial choices of these corrections there exist non-trivial SUSY breaking AdS vacua and critical points in the fundamental domain of the K\"ahler modulus $T$ away from its boundaries. This invalidates a long-standing conjecture that such critical points were to exist only on the boundary of the fundamental domain.

It is clear that based on these results there are several immediate directions for future work. Beyond the obvious task of establishing precise calculations for the form of the ${\cal O}(e^{-1/g_s})$ K\"ahler corrections, the structure of the modular invariant $T$-dependent non-perturbative corrections in the superpotential seems to foretell the existence of a whole landscape of SUSY breaking AdS vacua inside the $T$-fundamental domain. Determining the structure of this landscape as well as their potential upliftability to dS is a further task for future work. See also~\cite{Mourad:2016xbk,Basile:2018irz,Baykara:2022cwj} for non-SUSY heterotic AdS vacua.

Another straightaway question would be to include the presence of Wilson lines on heterotic orbifolds, as these are often used in breaking the heterotic gauge symmetry down to acceptable MSSM-like particle physics sectors. Including discrete Wilson lines typically breaks $\pslz$ modular invariance down to congruence subgroups, which would affect the gaugino condensate-induced superpotential and thereby the vacuum structure of the theory. We leave a thorough study of these cases as yet another task for future work. 
On a wider scope, generalizing our results to several K\"ahler moduli and the inclusion of the complex structure moduli of orbifolds immediately calls for extending modular invariance beyond $\pslz$ to e.g. SP$(4,\mathbb{Z})$ and more general automorphic groups which promise a landscape of heterotic vacua with an even richer number-theoretical structure. 

Finally, on the more phenomenological side, the structure of the scalar potential connecting the Minkowski and dS vacua achievable by stabilizing the dilaton with ${\cal O}(e^{-1/g_s})$ K\"ahler corrections in $F_S$-SUSY-breaking vacua at the $T$-self dual points, indicates the presence of potentially slow-roll flat saddle points. As these might innately be suitable for constructing slow-roll inflation in these stabilized heterotic vacua, and string inflation setups within heterotic string vacua are mostly unknown, we leave this, too, as a well-motivated task for the future.

\section*{Acknowledgments}
We thank Rafael \'Alvarez-Garc\'ia, Cesar Fierro Cota, Ori Ganor, Arthur Hebecker, Abhiram Kidambi, Daniel Kl\"awer, Alessandro Mininno, Hans Peter Nilles, Enrico Parisini,  Fernando Quevedo and Timo Weigand for useful discussions. J.M.L. and N.R. (partially) are supported by the Deutsche Forschungsgemeinschaft under Germany's Excellence Strategy - EXC 2121 ``Quantum Universe'' - 390833306. N.R. is partially supported by a Leverhulme Trust Research Project Grant RPG-2021-423. A.W. is supported by the ERC Consolidator Grant STRINGFLATION under the HORIZON 2020 grant agreement no. 647995.

\appendix

\section{Modular Symmetry \& Forms}
\label{app:modularforms}
Here we collect some basic facts and identities on modular forms relevant for the results and arguments of the main text. For further details, see~\cite{123ModForms}. 

\subsection{Basics \& Definitions}
As stated in the main text, the correct duality group of the K\"{a}hler modulus is  $\pslz=\slz/\{\pm \mathds{1}\}$
since the elements $\gamma\in\slz$ and $-\gamma\in\slz$ define the same transformation for $T$ in~\cref{eq:slztransf}. The generators of $\pslz$ are
\be
\mathcal{S} = \left(\begin{array}{cc}
    0 & 1 \\
    -1 & 0\end{array}\right) \coma
\mathcal{T} = \left(\begin{array}{cc} 1 & 1\\ 0 & 1\end{array}\right)\fstop
\ee
Using these generators, one can show that $T=i$ is stabilized by the order 2 cyclic subgroup $\{1, \mathcal{S}\}$ and 
$T=e^{2\pi i/3}\equiv \rho$ is invariant under the action of the order 3 cyclic subgroup $\{1, \mathcal{S}\mathcal{T},(\mathcal{S}\mathcal{T})^2\}$.

A modular form of weight $(p,q)$ is a function that transforms as
\begin{align}
    f(T,\bar{T})\rightarrow f(\gamma\cdot T,\gamma\cdot \bar{T}) = (cT+d)^p(c\bar{T}+d)^qf(T,\bar{T})\fstop
\end{align}
A particular subclass are holomorphic modular forms with weights $(p,0)$. We give our conventions for these in terms of the nome $q=e^{2\pi i T}$.
\begin{itemize}
    \item j-invariant
          \be\begin{split}
              j(T) =& \frac{1}{q} +744+196884q+21493760q^2 + 864299970q^3\\& + 20245856256q^4+333202640600q^5+\cdots
              \end{split}
        \ee
    \item Modular Discriminant\\
            Is the weight $(12,0)$ cusp form 
            \be
                \Delta (T) = \eta^{24}(T)\coma
            \ee
            where the Dedekind eta is
            \be
                \eta(T) = q^{1/24}\prod_{n=1}^\infty (1-q^n)\fstop
            \ee
            Note that the transformation property of the Dedekind eta is
            \be
                \eta(\gamma\cdot T) = \epsilon(\gamma) (cT+d)^{\frac{1}{2}}\eta(T)\fstop
            \ee
           The multiplier system is a $24$-th root of unity and is given in terms of the Rademacher phi function $\Phi(\gamma)$:
           \be
                \epsilon(\gamma) = \exp\bigg[ \frac{2\pi i}{24}\Phi(\gamma)- \frac{i\pi}{4}\bigg]\fstop
           \ee
           For more explicit formulae, see~\cite{DHoker:2022dxx}.
    \item Holomorphic Eisenstein Series\\
             For $k>1$, the weight $(2k,0)$ holomorphic Eisenstein series are
            \begin{align}
    G_{2k}(T) = \sideset{}{'}\sum_{c,d\in\mathbb{Z}} \frac{1}{(c T +d)^{2k}}
              = \zeta(2k) \sum_{(c,d)=1}\frac{1}{(c T +d)^{2k}} = \zeta(2k) E_{2k}(T)\coma
\end{align}
where the prime indicates that we omit $c,\,d=0$. Our conventions here differ from the mathematics literature by a factor of $2$. These have the Fourier development
        \be
            G_{2k}(T) = 2\zeta(2k)\bigg(1-\frac{4k}{B_{2k}}\sum_{n=1}^\infty\sigma_{2k-1}(n)q^n\bigg)\coma
        \ee
    with the divisor sigma
        \begin{align}
            \sigma_a(n) = \sum_{d | n} d^a \coma
        \end{align}
        where $d|n =$ ``$d$ divides $n$". The exceptional case is $G_2(T)$, which is a quasi-modular form that transforms as   
        \be
            G_2(\gamma\cdot T) = (cT+d)^2G_2(T) -2\pi i c(cT+d)\fstop
        \ee
    Then one can define a non-holomorphic Eisenstein series with weight $(2,0)$ as
        \be
            \widehat{G}_2(T) = G_2(T) + \frac{2\pi}{i(T-\tbar)}\fstop
        \ee
    The quasi-modular form $G_2(T)$ and the Dedekind eta are related by
        \be
            \frac{d}{dT}\ln(\eta(T)) = \frac{i}{4\pi}G_2(T)\fstop
        \ee
    \item Real Analytic Eisenstein Series:\\
            These are non-holomorphic modular forms of weight $(0,0)$. Similar to the series expression of the holomorphic Eisenstein series, we have
            \begin{align}
            \label{eq:RAEis}
                \mathbf{E}_s(T) &= \sum_{\gamma\in\Gamma_{\infty}\backslash PSL(2,\mathbb{Z})} \text{Im}(\gamma\cdot T)^s\\
                                &= t^s + \sum_{(c,d)=1, c\ge 1}\frac{t^s}{\lvert c T+d\rvert^{2s}}\\
                                &= \frac{1}{\zeta(2s)}\mathbf{G}_s(T)\fstop
            \end{align}     
        The sum converges for $\text{Re}(s)>1$ and has a Fourier decomposition
        \begin{align}
            \mathbf{E}_s(T) &= t^s + \frac{\Lambda(1-s)}{\Lambda(s)}t^{1-s}+\sum_{j=1}^\infty 4\cos(2\pi j a) \frac{\sigma_{2s-1}(j)}{j^{s-\frac{1}{2}}\Lambda(s)}\sqrt{t} K_{s-\frac{1}{2}}(2\pi jt)\,,
        \end{align}
    with $K$ is the modified Bessel function of the second kind and $\Lambda(s) = \pi^{-s}\Gamma(s)\zeta(2s)$ is a symmetrized version of the Riemann zeta function. The Fourier decomposition allows for a meromorphic continuation in the whole $s$ plane.
\end{itemize}
In general, the derivative of a modular form is not a modular form. However, one can define a covariant modular derivative utilizing $G_2(T)$. If $M_{2k}(T)$ is a weight $(2k,0)$ modular form, then
\be
M_{2k+2}(T) = \partial_TM_{2k}(T)+\frac{k}{i\pi} G_2(T)M_{2k}(T)
\ee
is a weight $(2k+2,0)$ modular form. $G_2(T)$ is again exceptional, with 
\be
    \widetilde{M}_4(T) = \partial_TG_2(T) -\frac{i}{2\pi}G_2^2(T)
\ee
defining a weight $(4,0)$ modular form. It is also possible to build non-holomorphic modular forms in a similar fashion. If $\widetilde{M}_{\kappa}(T)$ is a weight $(\kappa,0)$ modular form, then \be
    \widetilde{M}_{\kappa+2}(T,\tbar) = \partial_T \widetilde{M}_{\kappa}(T) +\frac{\kappa}{T-\tbar}\,\widetilde{M}_{\kappa}(T)
\label{eq:2ndcov}
\ee
is a weight $(\kappa+2,0)$ non-holomorphic modular form. This is the method by which the scalar potential achieves modular invariance.
There are several features of modular forms that are useful in searching for vacua:
\begin{itemize}
    \item weight $(2k,0)$ modular forms vanish at $T=i$ for $k\in 2\mathbb{Z}+1$,
    \item weight $(2k,0)$ modular forms vanish at $T=\rho$ for $k\neq 0$ mod $3$.
\end{itemize}
These statements follow directly from the transformation properties of the modular forms as the behavior $\mathcal{S}\cdot i =i$ and $\mathcal{S}\mathcal{T}\cdot \rho = (\mathcal{S}\mathcal{T})^2\cdot \rho$. The case $k=1$ applies to the derivative of any modular function and implies that this derivative must vanish at the fixed points, as claimed in the main text for the derivative of~\cref{eq:pot1}.

\subsection{Numerical Values}
At the self-dual points, we have the values
\begin{align}
    \eta(i) &= \frac{\Gamma\left(\frac{1}{4}\right)}{2\pi^{3/4}}\\
                \eta(\rho) &= e^{-\frac{i\pi}{24}}\frac{3^{1/8}\Gamma^{3/2}(1/3)}{2\pi}
\end{align}
and
\begin{align}
    j(i) &= 1728= 12^3\\
    j(\rho) &= 0\\
    \partial_T j(i) &= 0\\
    \partial_T j(\rho) &= 0\fstop
\end{align}
The vanishing of $\partial_T j(T)$ is an avatar of the statements of weight $(2,0)$ forms above. Similarly, we immediately obtain
\be
    \widehat{G}_2(i,-i) = \widehat{G}_2(\rho,\bar{\rho}) = 0\fstop
\ee
For the other Eisenstein series relevant for this paper, we have
\begin{align}
    G_4(i) &= \frac{\Gamma^{8}(1/4)}{960\pi^2}\\
    \partial_T G_4 (i) &= i\frac{\Gamma^{8}(1/4)}{480\pi^2}\\
    G_4(\rho) &= 0\\ 
    \partial_T G_4 (\rho) &=-i\frac{\Gamma^{18}(1/3)}{1280\pi^7}
\end{align}
and
\begin{align}
    G_6(i) &=0\\
    \partial_T G_6(i) &= -i\frac{\Gamma^{16}(1/4)}{215040\pi^5}\\
    G_6(\rho) &= \frac{\Gamma^{18}(1/3)}{8960\pi^6}\\
    \partial_T G_6(\rho) &= i\sqrt{3}\,\frac{\Gamma^{18}(1/3)}{4480\pi^6}\fstop
\end{align}
Thus for the function $H(T)$ in~\cref{eq:Hpara}, we find
\begin{align}
    H(i) &=\begin{cases} 
      0 & m> 0 \\
      \frac{4^n\pi^{4n}}{15^n}\mathcal{P}(1728) & m =  0
   \end{cases}\label{eq:Hcases1}\\
    H_T(i) &=\begin{cases} 
      0 & m> 1 \;\;\&\;\; m=0 \\
      -\frac{i}{7}2^{2n+1}15^{-n-1}\pi^{4n+4}\Gamma^4(1/4)\mathcal{P}(1728) & m= 1
   \end{cases}\label{eq:Hcases2}
\end{align}
and
\begin{align}
    H(\rho) &=\begin{cases} 
      0 & n> 0 \\
      \left(\frac{16i}{35}\right)^m 3^{-\frac{3m}{2}}\pi^{6m}\mathcal{P}(0) & n= 0
   \end{cases}\label{eq:Hcases3}\\
   H_T(\rho) &=\begin{cases} 
      0 & n> 1 \;\;\&\;\; n=0 \\
       -i\left(\frac{16i}{7}\right)^m 3^{-1-\frac{3m}{2}} 5^{-m-1}e^{i\pi/3}\pi^{6m+1}\Gamma^6(1/3)\mathcal{P}(0) & n=1
   \end{cases}\label{eq:Hcases4}
\end{align}
The non-zero values of $H_T(T)$ at the fixed points may seem surprising at first glance given the argument above that the derivative of a modular function must vanish at the fixed points. However, recall that $H(T)$ is not strictly a modular function -- it can transform with a non-trivial multiplier system. Instead, the expression $\partial_T \lvert H\rvert^2 = \bar{H}H_T$ transforms as a proper weight $(2,0)$ non-holomorphic modular form and must vanish at the fixed points. Thus it must be that either $H(T)=0$ or $H_T(T)=0$ at $T=i,\,\rho$.

\section{K\"{a}hler Modulus Sector Expressions}
\label{app:KahStuff}
\subsection{General Condition For Theorem 1}
\label{app:HTT}
In Theorem 1, we referenced an expression for $H_{TT}$ such that $\partial_TV(T,S)=0$ for $\widetilde{F}_T\neq 0$. The expression is   
\begin{align}
    H_{TT} = \,&\frac{3i}{2\pi}\bigg\{ \widehat{G}_2 \widetilde{F}_T+H\partial_T\widehat{G}_2 +\widehat{G}_2H_T\bigg\} + \frac{\mathcal{K}_{TT\bar{T}}}{\mathcal{K}_{T\bar{T}}}\widetilde{F}_T \\&- \left(\frac{3i}{2\pi}\bar{H}\partial_T\bar{\widehat{G}}_2+\mathcal{K}_{T\bar{T}}\bar{H}\left(A(S,\bar{S})-3\right)\right)e^{2i\phi}\coma
\end{align}
where we have suppressed arguments and defined re-scaled F-terms $\widetilde{F}_T = \eta^6 \Omega^{-1}F_T$. The phase is given by $\phi = \text{arg}(\widetilde{F}_T)$.
\subsection{General dS Vacuum Conditions for $T=i$}
\label{app:icases}
Let us assume that the dilaton subsector is stabilized. By considering the most general case for the polynomial $\mc{P}(j(T))$, we have a dS minimum at $T=i$ when 
\begin{equation}
\begin{array}{l}
     \lvert\mc{B}_n\rvert <1\,:\begin{cases} A(S,\Sb)>2+  \lvert\mc{B}_n\rvert\,\,\,
             \mbox{ if }\,\, \mathcal{R}_n>- \frac{192\pi^4}{\Gamma^8(1/4)}\lvert\mc{B}_n\rvert\\
                \quad\quad\mbox{ or}\\
            \mc{C}_n<A(S,\Sb)<2- \lvert\mc{B}_n\rvert\,\,\,
            \mbox{ if }\,\, \mathcal{R}_n>\frac{192\pi^4}{\Gamma^8(1/4)}\lvert\mc{B}_n\rvert\\ 
            \quad\quad\mbox{ or}\\
            A(S,\Sb)> \mc{C}_n\,\,\, \mbox{ if }\,\, -\frac{192\pi^4}{\Gamma^8(1/4)}<\mathcal{R}_n\leq-\frac{192\pi^4}{\Gamma^8(1/4)}\lvert\mc{B}_n\rvert\\ 
            \quad\quad\mbox{ or}\\
            A(S,\Sb)<\mc{C}_n\,\,\, \mbox{ if }\,\, \mathcal{R}_n<-\frac{192\pi^4}{\Gamma^8(1/4)}
            \end{cases}
        \end{array}
\end{equation}
\begin{equation}
\begin{array}{l}
    \lvert\mc{B}_n\rvert>1\,: \begin{cases}
    2-\lvert\mc{B}_n\rvert<A(S,\Sb)<2+ \lvert\mc{B}_n\rvert\,\,\,
                \mbox{ if }\, \lvert \mathcal{R}_n\rvert<\frac{192\pi^4}{\Gamma^8(1/4)}\lvert\mc{B}_n\rvert\\ 
                \quad\quad\mbox{ or}\\
                \mc{C}_n<A(S,\Sb)<2+ \lvert\mc{B}_n\rvert\,\,\,
                \mbox{ if }\,\, \mathcal{R}_n\geq\frac{192\pi^4}{\Gamma^8(1/4)}\lvert\mc{B}_n\rvert\\ 
                \quad\quad\mbox{ or}\\
                2- \lvert\mc{B}_n\rvert <A(S,\Sb)<\mc{C}_n\,\,\, \mbox{ if }\,\, \mathcal{R}_n<-\frac{192\pi^4}{\Gamma^8(1/4)}\lvert\mc{B}_n\rvert
                \end{cases}
\end{array}
\end{equation}
where we used the notation for $\mc{B}_n$ of \cref{eq:definitionB} and we have also introduced
\be
\mathcal{R}_n = 1+8n+41472\, \re\!\left({\frac{\mc{P}^\prime(1728)}{\mc{P}(1728)}}\right)\coma
\ee
\be
\mc{C}_n\equiv\frac{2+\frac{\Gamma^8(1/4)}{192\pi^4}\,\mathcal{R}_n-\lvert\mc{B}_n\rvert^2}{1+\frac{\Gamma^8(1/4)}{192\pi^4}\,\mathcal{R}_n}\fstop
\ee
If we assume $\mc{P}(1728)\in \mathbb{R}$, these conditions greatly simplify to
 \be
    \begin{array}{lcl}
         2-\mc{B}_n<A(S,\bar{S})<2+\mc{B}_n & \quad\mbox{ for }\quad & \mc{B}_n>1\coma \\
        A(S,\bar{S})>2+\lvert\mc{B}_n\rvert & \quad\mbox{ for }\quad & -1<\mc{B}_n<1\coma \\
         2+\mc{B}_n<A(S,\bar{S})<2-\mc{B}_n & \quad\mbox{ for }\quad & \mc{B}_n<-1 \fstop
    \end{array}
\ee

\section{Orbifolds with $\delta_{GS}\neq 0$ \& A Third No-Go Theorem}
\label{sec:anomaly}
A natural question arises from the discussion in the main text -- can the mixing of the moduli in the K\"{a}hler potential evade the theorems and result in dS vacua, even in the absence of Shenker-like terms? There is a natural basis for this mixing arising from anomaly cancellation, as described in~\cref{sec:heterotic}. We now return to these orbifold models and prove a limited no-go result similar to Theorem 2.

We will utilize the chiral multiplet formalism where the dilaton is invariant under modular transformations. Our starting points are the superpotential~\cref{eq:superpot1} and the K\"{a}hler potential~\cref{eq:kahpot1} using~\cref{eq:modinvKahpot} so that  
\be
   k(S,\Sb,T,\tbar) = -\ln\left(S+\bar{S}+\delta_{GS}\ln(-i(T-\tbar)\lvert\eta(T)\rvert^4) \right)+ \delta k(S,\Sb,T,\tbar) \fstop
\label{eq:biganomKah}
\ee
Note that we have defined the dilaton-sector K\"{a}hler potential with the loop-corrected contribution and a function $\delta k(S,\Sb,T,\tbar)$, which we take to be a modular-invariant, non-holomorphic function that encodes non-perturbative effects such as the Shenker-like terms. The Green-Schwarz anomaly coefficient $\delta_{GS}$ should not be confused with $\delta k$. 
Similar to the main text, the F-terms read
\begin{align}
    F_S &= \frac{H(T)}{\eta^6(T)}\bigg( \Omega_S + (-Y^{-1}+ \delta k_S) \Omega\bigg)\equiv \frac{H(T)}{\eta^6(T)}\widetilde{F}_S\coma \label{eq:GenDilF}\\
    F_T &= \frac{\Omega(S)}{\eta^6(T)}\bigg( H_T - \frac{3i}{2\pi}\bigg(1+\frac{\delta_{GS}}{3Y}\bigg)H\widehat{G}_2+\delta k_T H\bigg) \equiv \frac{\Omega(S)}{\eta^6(T)}\widetilde{F}_T\coma
    \label{eq:GenKahF}
\end{align}
where we have defined $Y = S+\Sb+\delta_{GS}\ln(-i(T-\tbar)\lvert\eta(T)\rvert^4)$.
The K\"{a}hler metric is much more complicated than the cases found in the main text and we will not reproduce here. We write the scalar potential for these models as
\be
\begin{split}
V &= e^kZ(T,\tbar)\bigg\{ \lvert H\rvert^2 
    \lvert\Omega\rvert^2 (\widetilde{A}(S,T)-3)+\lvert\Omega\rvert^2\widehat{V}_2(S,T)\\ 
    &\hspace{3.7cm} +H\bar{\Omega}\mathcal{K}^{S\tbar}\widetilde{F}_S\bar{\widetilde{F}}_{\bar{T}} +\bar{H}\Omega\mathcal{K}^{\Sb T}\bar{\widetilde{F}}_{\bar{S}}\widetilde{F}_T\bigg\}\coma
    \end{split}
\label{eq:genpot}
\ee
with $\widetilde{A}(S,T) =  \lvert\Omega\rvert^{-2}\mathcal{K}^{S\Sb}\widetilde{F}_S\bar{\widetilde{F}}_{\bar{S}}$, $\widehat{V}_2(S,T) = \mathcal{K}^{T\tbar}\widetilde{F}_T\bar{\widetilde{F}}_{\bar{T}}$, and $Z(T,\tbar)$ as defined in~\cref{eq:potparts3}.

We now prove a no-go result similar to Theorem 2 of the main text. In particular, we demonstrate the impossibility of dS vacua at the fixed points of $\pslz$ if the $\delta k$ contribution to the K\"{a}hler potential is discarded.

\begin{theorem}
At the points $(T_0=i,S_0)$ or $(T_0=\rho,S_0)$, the scalar potential in~\cref{eq:genpot} defined via~\cref{eq:superpot1} and~\cref{eq:biganomKah} with $\delta k=0$
can not simultaneously satisfy:
\begin{enumerate}[label=(\roman*).,leftmargin=3\parindent]
    \item $V(T_0,S_0) > 0$
    \item $\partial_SV(T_0,S_0)= 0 \quad \&  \quad \partial_TV(T_0,S_0) = 0 $
    \item Eigenvalues of the Hessian of $V(T,S)$ at $(T_0,S_0)$ are all $\ge 0$.\vspace{0.2cm}
\end{enumerate}
\end{theorem}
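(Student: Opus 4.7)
The plan is to reduce Theorem~3 to Theorems~1 and~2 by exploiting a structural simplification that occurs specifically at the $\pslz$ fixed points. The key technical observation is that $Y \equiv S + \bar{S} + \delta_{GS}\ln(-i(T-\bar{T})|\eta(T)|^4)$ satisfies $Y_T = (i\delta_{GS}/2\pi)\widehat{G}_2(T,\bar{T})$, which is a weight $(2,0)$ modular form and therefore vanishes at $T_0 \in \{i,\rho\}$. Consequently $\mathcal{K}_{S\bar{T}}|_{fp} = Y_{\bar{T}}/Y^2|_{fp} = 0$, so the full K\"ahler metric is block diagonal at the fixed points. The standard modular-invariance argument from~\cref{sec:heterotic} (where $\partial_S\partial_T V$ transforms as a weight $(2,0)$ modular form in $T$) makes the Hessian of $V$ block diagonal at these points as well, so the dilaton and K\"ahler modulus sectors fully decouple there.

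We then split the argument using the expressions for $H$ at the fixed points in \cref{eq:Hcases1,eq:Hcases2,eq:Hcases3,eq:Hcases4}, according to whether $H(T_0) \neq 0$ (generic) or $H(T_0) = 0$ (the exceptional cases $(m,n)=(1,n)$ at $T=i$ and $(m,n)=(m,1)$ at $T=\rho$). In the generic case, $\widetilde{F}_T|_{fp}$ vanishes because both $H_T(T_0)$ and $\widehat{G}_2|_{fp}$ do, so $V|_{fp} = e^{\mathcal{K}}[\mathcal{K}^{S\bar{S}}|F_S|^2 - 3|W|^2]$ at the fixed point. The shift $\tilde{S} = S + c/2$ with $c = \delta_{GS}\ln(-i(T_0-\bar{T}_0)|\eta(T_0)|^4)\in\mathbb{R}$ then endows the dilaton sector with the tree-level K\"ahler potential $-\ln(\tilde{S}+\bar{\tilde{S}})$. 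Block diagonality ensures that the dilaton block of the Hessian depends only on dilaton-sector quantities, so the Theorem~2 computation carries over verbatim with $S+\bar{S}\to \tilde{S}+\bar{\tilde{S}}$, yielding $\partial_S\partial_{\bar{S}}V<0$ and contradicting (iii).

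In the exceptional case $H(T_0)=0$, we have $W|_{fp}=0$ and $F_S|_{fp}=0$, whereas $F_T|_{fp}=\Omega(S_0)H_T(T_0)/\eta^6(T_0)$ is generically nonvanishing, so $V|_{fp}=e^{\mathcal{K}}\mathcal{K}^{T\bar{T}}|F_T|^2$. A direct computation, using block diagonality and $\partial_S\mathcal{K}^{T\bar{T}}|_{fp}=0$ (which follows from $\partial_S Y_{T\bar{T}}=0$), reduces $\partial_S V|_{fp}$ to a multiple of $\bar{\Omega}(\bar{S}_0)\widetilde{F}_S|_{S_0}$. Condition (ii) therefore forces $\widetilde{F}_S|_{S_0}=0$ (the Class A condition), so $A(S_0,\bar{S}_0)=0$. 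But the $T$-block of the Hessian in these exceptional cases, mirroring the $m=1$ computation in~\cref{sec:vacua} and its $n=1$ analogue at $T=\rho$, requires $A(S_0,\bar{S}_0)$ to exceed a strictly positive $\mathcal{O}(1)$ threshold in order to be positive semidefinite. The main obstacle is to verify carefully that the $\delta_{GS}$-dependent corrections to $\mathcal{K}_{T\bar{T}}=(3+\delta_{GS}/Y_0)/(T_0-\bar{T}_0)^2$ only modify this threshold by an overall positive rescaling and do not alter its sign; the block-diagonal structure established in the first step should ensure that this is the case.
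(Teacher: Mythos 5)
Your treatment of the generic case $H(T_0)\neq 0$ is sound and matches the paper's: since $Y_T=\tfrac{i\delta_{GS}}{2\pi}\widehat{G}_2$ vanishes at the fixed points along with all its $S$-derivatives, the dilaton block of the Hessian sees only the tree-level structure with $S+\bar S$ replaced by $Y_0$, and the paper's own computation indeed lands on $\partial_S\partial_{\bar S}V=-2Z_0\lvert H_0\rvert^2(\Lambda^4+2\lvert\Omega_0\rvert^2)/Y_0^3$, i.e.\ exactly the Theorem~2 result with $(S_0+\bar S_0)\to Y_0$.

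The exceptional case, however, contains a genuine error. You assert $\partial_S\mathcal{K}^{T\bar T}\rvert_{fp}=0$ on the grounds that $\partial_S Y_{T\bar T}=0$. But $\mathcal{K}_{T\bar T}$ depends on $S$ not only through $Y_{T\bar T}$ but through the explicit $1/Y$ factor: at the fixed point $\mathcal{K}_{T\bar T}=-(3Y+\delta_{GS})/\bigl(Y(T-\bar T)^2\bigr)$, so $\partial_S\mathcal{K}_{T\bar T}=\delta_{GS}/\bigl(Y^2(T-\bar T)^2\bigr)\neq 0$ whenever $\delta_{GS}\neq0$. Consequently $\partial_S V\rvert_{fp}$ is \emph{not} proportional to $\bar\Omega\,\widetilde F_S$; the correct stationarity condition (as in the paper) is $(3Y_0+\delta_{GS})(\Omega_S)_0=3\Omega_0$, which gives $\widetilde F_S=-\delta_{GS}\Omega_0/\bigl(Y_0(3Y_0+\delta_{GS})\bigr)\neq0$ and hence $A(S_0,\bar S_0)=\delta_{GS}^2/(3Y_0+\delta_{GS})^2\neq0$. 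Your subsequent step — ``condition (ii) forces the Class~A condition, so $A=0$, which lies below the positive threshold'' — therefore does not go through as stated. Moreover the thresholds you invoke from the $m=1$ and $n=1$ analyses of \cref{sec:vacua} were derived with the tree-level $\mathcal{K}_{T\bar T}=-3/(T-\bar T)^2$ and would need to be rederived with the $\delta_{GS}$-corrected metric; your closing remark flags this but defers it. The paper avoids both issues by directly computing $\partial_T\partial_{\bar T}V\rvert_{fp}=-2(3Y_0^2+\delta_{GS}Y_0+\delta_{GS}^2)Z_0\lvert\Omega_0\rvert^2\lvert H_T\rvert_0^2/\bigl(Y_0(3Y_0+\delta_{GS})^2\bigr)<0$, using the stationarity condition above together with the modular identity $(H_{TT})_0=-2(T_0-\bar T_0)^{-1}(H_T)_0$. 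To repair your argument you would need to replace the appeal to $\widetilde F_S=0$ and the old thresholds with this (or an equivalent) explicit evaluation of the $T$-block.
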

\begin{proof}
First, we are considering the fixed points of $\pslz$, so the requirement $\partial_TV(T_0,S_0) = 0$ in (ii) is automatically satisfied. Furthermore, the Hessian is also block diagonal, so we can again consider the K\"{a}hler modulus and dilaton blocks individually.    
Second, recall from~\cref{app:modularforms} that at the fixed points of $\pslz$, the product $(\bar{H}H_T)_0$ must vanish as a consequence of modular symmetry -- $\partial_T \lvert H\rvert^2$ transforms as a weight $(2,0)$ modular form with trivial multiplier system. One can also observe this numerically from~\cref{eq:Hcases1,eq:Hcases2,eq:Hcases3,eq:Hcases4}. In almost all cases, the vanishing of this product occurs because $H_T=0$. This immediately implies that $\widetilde{F}_T=0$ since $\widehat{G}_2(T,\tbar)$ must also vanish at the fixed points. However, cases where the integer pair $(m,n)$ in~\cref{eq:Hpara} correspond to $(1,n)$ or $(m,1)$ do not follow this pattern since they give $H_T\neq 0$ at $T=i,\,\rho$. Thus, we must divide our analysis into two cases. We will not consider the cases where $H=H_T=0$ at $T=i,\,\rho$ since these trivially give Minkowski extrema.\\

\noindent\textbf{Case 1:} $H(T_0)\neq 0$\\
\noindent This is the case where $(\widetilde{F}_T)_0 =0$, so all positive contributions to the vacuum energy must come from the dilaton sector in the form of the function $\widetilde{A}(S,T)$ in~\cref{eq:genpot}. Thus to satisfy (i), we require $(\widetilde{F}_S)_0\neq 0$ \& $(\mathcal{K}^{S\Sb})_0\neq 0$. We can again introduce a parameter $\Lambda$ such that 
\be
    e^{k_0}Z_0\lvert H_0\rvert^2 \lvert\Omega_0\rvert^2(\widetilde{A}_0-3) = e^k_0Z_0\lvert H_0\rvert^2\Lambda^4\fstop
\ee
This can be solved to give
\be
    \Omega_S(S_0) = Y_0^{-1}\Omega_0 \pm (\mathcal{K}^{S\Sb}_0)^{-1/2}\left(\Lambda^2 \pm i\sqrt{3\lvert H_0\rvert^2}\right)\fstop
\ee
Once again any of the four sign choices are valid. The first condition in (ii) is satisfied if
\be
    \Omega_{SS}(S_0) = \frac{2\bar{\Omega}_0}{Y_0^2}\frac{(\bar{\widetilde{F}}_{\bar{S}})_0}{(\widetilde{F}_S)_0}\coma
\ee
where we have utilized the vanishing of $(\widetilde{F}_T)_0$. Inserting the above expressions into $\partial_S\partial_{\bar{S}}V(S_0,T_0)$, we find
\be
    \partial_S\partial_{\bar{S}}V(S_0,T_0) = - \frac{2 Z_0 \lvert H_0\rvert^2 (\Lambda^4+2\lvert\Omega_0\rvert^2)}{Y_0^3}\fstop
\ee
Note that $Y_0>0$ is required for a sensible value of the string coupling. Hence, by identical logic to Theorem 2, at least one eigenvalue in the dilaton block of the Hessian is negative and a (meta)stable dS vacuum is not possible for any function $\Omega(S)$.\\

\noindent\textbf{Case 2:} $H_T(T_0)\neq 0$\\
Since it must be that $H_0=0$ for this case, the potential in~\cref{eq:genpot} simplifies greatly to
\be
    V_0 = e^{k_0}Z_0 \lvert\Omega_0\rvert^2 (\widehat{V}_2)_0\fstop
\ee
Thus we must demand that $\Omega_0\neq 0$ to avoid a Minkowski extremum. Next, we note that
\be
    \partial_SV(S_0,T_0) = \frac{-(T_0-\tbar_0)^2}{3Y_0+\delta_{GS}}Z_0\bar{\Omega}_0 \lvert H_T\rvert^2_0\left((3Y_0+\delta_{GS})(\Omega_S)_0 -3\Omega_0\right)\coma
\ee
after using the vanishing of $H_0$. Then to satisfy the first condition of (ii), the factor in parentheses must vanish and we obtain $(\Omega_S)_0 = 3(3Y_0+\delta_{GS})^{-1}\Omega_0$. We now investigate the K\"{a}hler modulus block of the Hessian. Using the above,\footnote{We also use $(H_{TT})_0 = -2(T_0-\tbar_0)^{-1}(H_T)_0$. This follows from~\cref{eq:2ndcov} with $\widetilde{M}_2 = H_T$ and so $\bar{H}_{\bar{T}}(H_{TT}+2(T-\tbar)^{-1}H_T)$ transforms as a weight $(4,2)$ modular form with trivial multiplier system. Thus this combination must vanish at the fixed points, and since we are considering cases here where $(H_T)_0\neq 0$, the identity follows.} we see that
\be
\partial_T\partial_{\bar{T}}V(S_0,T_0) = -\frac{2(3Y_0^2+\delta_{GS}Y_0+\delta_{GS}^2)}{Y_0(3Y_0+\delta_{GS})^2}Z_0\lvert\Omega_0\rvert^2\lvert H_T\rvert^2_0\fstop
\ee
Since both $Y>0$ and $\delta_{GS}>0$~\cite{Derendinger:1991hq}, we see that  $\partial_T\partial_{\bar{T}}V(S_0,T_0)<0$ and by the same logic used in Theorem 1 at least one of the eigenvalues of the K\"{a}hler modulus block of the Hessian is negative.
\end{proof}

\noindent From Theorem 3, we conclude that there is no non-perturbative contribution in the form of $\Omega(S)$ or $H(T)$ that can lift a fixed point extremum to a dS vacuum if $\delta k=0$, even for models with $\delta_{GS}\neq 0$. As discussed in the main text, if this additional term is included, dS vacua may be possible. If we again consider a fixed point $T_0$ of $\pslz$ with $H(T_0)\neq 0$, then~\cref{eq:genpot} is simply
\be
    V(S,T_0) = e^k Z(T_0,\tbar_0)\lvert\Omega\rvert^2\lvert H(T_0)\rvert^2 (\widetilde{A}(S,T_0)-3)
\ee
This is similar to the scenarios considered in the main text. We will not perform a thorough analysis here, but one could determine stability criteria for dS vacua at the fixed points and convert them into requirements on the Shenker-like terms encoded in $\delta k(S,\Sb,T,\tbar)$.

\section{Linear Multiplet Formalism \& The Inverse Map}
In this appendix, we develop a perturbative framework to invert the map between the linear multiplet formalism and the chiral multiplet formalism for the dilaton.
\label{app:linearchiral}
With the Shenker-like terms corrections, we have the relation
\begin{align}
 \frac{\ell}{1+f(\ell)}= \frac{1}{S+\bar{S}}\fstop
 \label{eq:invert1}
\end{align}
Our goal is to invert this relationship and write $\ell = \ell(s)$. Recalling the form of $f(\ell)$ used in~\cref{eq:npcorrlinear}
\begin{align}
    f(\ell) = \sum_n A_n \ell^{q_n}e^{-B/\sqrt{\ell}}\coma
\end{align}
we see why the task is non-trivial -- inversion requires us to solve a transcendental equation. We can make some progress in developing a perturbative approach. Let us work in the limit $\ell\ll 1$. Then the exponential dies out and to the lowest order we simply have $2s=\ell^{-1}$. To proceed, we want to develop a series 
\be
    \ell = \frac{1}{2s}+ \epsilon y(s) + \cdots
\ee
where $\epsilon$ is a small parameter controlling the expansion and $y(s)$ is some function of the dilaton to be determined. Unfortunately a brute force perturbative approach does not work -- we cannot expand the exponential $e^{-B/\sqrt{\ell}}$ about $\ell = 0$ since it has an essential singularity there. In the limit we are considering, $e^{-B/\sqrt{\ell}}$ is small, so our perturbative expansion should be based on the notion that
\be
    e^{-B/\sqrt{\ell}} \sim \mathcal{O}(\epsilon)\fstop
\ee
This would then imply that 
\be
    \ell \sim \frac{B^2}{\ln^2(\epsilon)}
\ee
to leading order. This is quite a peculiar setup, but intuitively it seems reasonable -- if $\epsilon$ is a small parameter, then $\ln^{-2}(\epsilon)$ is also a small positive number. We are still performing a perturbative expansion, just one that involves a non-linear map of the expansion parameter.
We formalize this observation by postulating a very general perturbative expansion 
\begin{align}
\begin{split}
    \ell &= \alpha \ell_1 + \epsilon^{a_2}\alpha^{b_2}\ell_2 + \epsilon^{a_3}\alpha^{b_2}\ell_3+\cdots\\
        &= \alpha \ell_1 + \delta\coma
\end{split}
\end{align}
where $\alpha = \ln^{-2}(\epsilon^a)$ and $a$ is unspecified for the moment. We also leave the powers in the higher order terms as variables to be solved. 
We now plug this into~\cref{eq:invert1}, expand around $\delta=0$, and match powers of $\epsilon$ and $\alpha$. For a sensible expansion, we must relate $\epsilon$ to $s$. The natural guess from the discussion above is that
\begin{align}
    s = \ln^2(\epsilon^a)\fstop
\end{align}
This also implies that the small parameters $\epsilon$ and $\alpha$ are defined in terms of $s$ as
\begin{equation}
    \epsilon = e^{-\frac{\sqrt{s}}{a}}\coma \mbox{where }\,\,
    \alpha = \frac{1}{s}\fstop
\end{equation}
To determine the value of $a$, we demand that
\begin{align}
    e^{-B/\sqrt{\ell_1}} &= \epsilon\coma\\
    \epsilon^{\sqrt{\ell_1}Ba} &= \epsilon\coma
\end{align}
so $a = \frac{\sqrt{\ell_1}}{B}$.
\subsection{Example 1: Only $A_0\neq 0$}
We explicitly solve the simple example where the series in~\cref{eq:npcorrlinear} collapses to $f(\ell) = A_0 e^{-B/\sqrt{\ell}}$. Then we have
\begin{align}
    2s &= \ell^{-1}(1+A_0e^{-B/\sqrt{\ell}})\\
    2\ln^2(\epsilon^a) &= (\alpha\ell_1+\delta)^{-1}(1+A_1e^{-B/\sqrt{\alpha\ell_1+\delta}})\fstop
\end{align}
To leading order,
\be
    \begin{split}
    2\ln^2(\epsilon^a) &= \frac{\ln^2(\epsilon^a)}{\ell_1}\\
    \ell_1 &= \frac{1}{2}\fstop
    \end{split}
\ee
Matching the next order yields
\be
    \begin{split}
    (a_2,b_2) &= (1,1)\\
    \ell_2 &= A_0\ell_1= \frac{A_0}{2}
    \end{split}
\ee
and following order is
\be
    \begin{split}
        (a_3,b_3) &= (2,1/2)\\
        \ell_3 &= \frac{A_0B\ell_2}{2\sqrt{\ell_1}}= \frac{A_0^2B}{2\sqrt{2}}\fstop
    \end{split}
\ee
We then arrive at the series expansion
\be
    \begin{split}
    \ell &= \alpha \ell_1 + \epsilon\alpha\ell_2 + \epsilon^2\alpha^{1/2}\ell_3 +\cdots  \\
         &=  \frac{\ell_1}{s} + \frac{e^{-B\sqrt{s/\ell_1}}}{s}\ell_2 + \frac{e^{-2B\sqrt{s/\ell_1}}}{s^{1/2}}\ell_3 +\cdots\\
         &= \frac{1}{2s} + \frac{A_0}{2s}e^{-B\sqrt{2s}} + \frac{A_0^2B}{2\sqrt{2s}}e^{-2B\sqrt{2s}}+\cdots\fstop
    \end{split}     
\ee

\subsection{Example 2: $A_0$, $A_1\neq 0$}
We now try the slightly harder example where $A_0,A_1\neq 0$ and $q_1=-\frac{1}{2}$ in~\cref{eq:npcorrlinear}. We must then solve
\be
    2s = \ell^{-1}(1+ (A_0+A_1\ell^{-1/2})e^{-B/\sqrt{\ell}})\fstop
\ee
Much of our discussion from the previous example carries over -- the only adjustment we have to make is to re-do the matching to determine values of the $(a_n,b_n)$. We find
\be
    \begin{split}
    (a_2,b_2) &= (1,1)\\
    \ell_2 &= A_0\ell_1
    \end{split}
\ee
\be
    \begin{split}
    (a_3,b_3) &= (1,1/2)\\
    \ell_3 &= A_1\ell_1^{1/2}
    \end{split}
\ee
Thus including more terms in the polynomial defining $f(\ell)$ results in slightly different perturbative expansions -- this seems a direct result of the new powers introduced by more terms in the polynomial. Our perturbtaive expansion is now
\be
    \begin{split}
    \ell &= \alpha\ell_1 + \epsilon\alpha\ell_2 + \epsilon\alpha^{1/2}\ell_3+\cdots\\
        &= \frac{1}{2s} +\frac{A_0}{2s}e^{-B\sqrt{2s}} + \frac{A_1}{\sqrt{2s}}e^{-B\sqrt{2s}}+\cdots
    \end{split}
\ee

\bibliographystyle{JHEP}
\bibliography{ref}

\end{document}